\newtheorem{definition}{Definition}
\newtheorem{theorem}[definition]{Theorem}
\newtheorem{proposition}[definition]{Proposition}
\newtheorem{lemma}[definition]{Lemma}
\newtheorem{corollary}[definition]{Corollary}
\newtheorem{rem}[definition]{Remark}
\newtheorem{rems}[definition]{Remarks}
\newcommand\unit{\hbox{\rm 1\kern-2.8truept l}}
\newcommand\modd{\kern-8pt\mod}
\newcommand\modp{\kern-5pt\mod}
\newcommand{\tr}[1]{{\rm tr }(#1)}   
\title{Asymptotic equivalence of a subclass of WCLT non degenerate GKSL generators}
\begin{document}

\author{Jorge R. Bola\~nos--Servin$^1$}
\date{November 2023}

\author{Jorge R. Bola$\tilde{\textrm{n}}$os-Servin}

\maketitle

\abstract{
We prove the assymptotic equivalence of a sequence of block diagonal matrices with Toeplitz blocks. The blocks are the principal submatrices of an originating Toeplitz sequence with generating symbol of the Wiener class.
As an application, using the invariance of certain \textit{diagonal}  and \textit{cyclic-diagonal} operator subspaces of the GKSL generators of circulant and WCLT quantum Markov semigroups, the asymptotic equivalence of the families is proved under suitable hypothesis.}

\section{Introduction}
The keystone to the theory of the asymptotic study of eigenvalues and singular values of Toeplitz matrices is a well known result due to Szeg\"o.  The first Szeg\"o Limit Theorem describes the collective asymptotic distribution of the eigenvalues in the hermitian case while the stronger  Avram-Partner theorem is concerned with the collective asymptotic distribution of the singular values of (not necessarily hermitian) Toeplitz matrices. The starting point is always a sequence of the  $n\times n$ truncations of an infinite Toeplitz matrix whose elements are the Fourier coefficients of a function commonly referred by specialists as the generating \textit{symbol}. It was only recently that the interest in asymptotic formulas for individual eigenvalues sparked back interest although for  a few narrow classes of generating symbols\cite{BBGM2}. For instance in Maximenko\cite{BBGM} a quantile function is used to achieve uniform approximation of the singular values.
 
Circulant matrices are a subclass of Toeplitz matrices with very convenient properties, such as that form an abelian sub-algebra, therefore diagonal with respect to a common basis which allows explicit computation of their eigenvalues. These matrices arise in many contexts, from stochastic processes as infinitesimal generators of Markov chains over abelian finite groups to numerical  analysis as  optimal preconditioners for Toeplitz systems\cite{Chan}. The idea of deriving asymptotic results on Toeplitz matrices by comparison with conveniently (adhoc) constructed circulant matrices has been quite useful  and is standard now \cite{Gray,Strang,Pearl}.

Quantum Markov Semigroups (QMS) are weak$*$-continuous semigroups of completely positive identity preserving maps acting on a von Neumann algebra. They are widely used in applications to quantum physics to represent the memoryless evolution of an open quantum system. From a probabilistic point of view, they are a natural non commutative generalization of classical Markov semigroups. In the case when a quantum Markov semigroup is defined on a finite dimensional algebra, as considered in this work, it is always uniformly continuous.

The families of QMS considered in this paper are the circulant family (\cite{BQ}) and the weak coupling limit type (WCLT) family (\cite{WCLT}) . In spite of the latter being quite a rich class of semigroups, Fagnola and Quezada\cite{FQ} pointed out the disjointness of these families. However both types of QMS exhibit the invariance of certain operator subspaces: \textit{diagonal} subspaces $\mathcal V$ for WCLT and  \textit{cyclic-diagonal} subspaces $\mathcal B$ for circulant. For the circulant family, the invariance has been used, together with the nice features of circulant matrices,  to obtain the explicit computation of invariant states, the quantum entropy production rate, see \cite{BQ}, and some spectral properties including the spectral gap \cite{BC}. For the WCLT family on the other hand, although it was thought that the invariance alone was enough to characterize the non degenerate class of WCLT generators, it has been recently established in \cite{FQ} that this is not the case and additional hypothesis are needed.

The aforementioned families of operator subspaces $\mathcal V$ and $\mathcal B$ are closely related to the Toeplitz and circulant matrix structure, it is natural to ask whether an ad-hoc family of circulant QMS can be constructed in order to derive asymptotic results of a WCLT QMS. The purpouse of this paper is to obtain asymptotically equivalent sequences for completely positive (CP) operators and GKSL generators with a Toeplitz.

 The main results of asymptotic equivalence are contained in Theorems \ref{AE-CP}  and \ref{AE-WCLT}, and as a consequence results on the eigenvalue distribution follow in Corollary - -.   

The paper is organized as follows. In Section \ref{prel} we review general results on the asymptotic behaviour of  sequences of circulant and Toeplitz matrices and introduce the main objects of our study (Equation (\ref{1Rdecomp})  and Definition \ref{def}). We also develop a suitable vectorization process which will be helpful in the sequel. Section \ref{sec-toep} is devoted to the block diagonal matrix representations (with Toeplitz blocks) of  C.P. Toeplitz operators and the asymptotic equivalence to C.P. circulant operators is proved under the natural assumption that the generating sequence is in $\ell^1(\mathbb Z)$. In Section \ref{wclt-ae}, block diagonal matrix (with pseudo-Toeplitz blocks) representations of the subclass of WCLT generators with  C.P. Toeplitz part  are derived. The asymptotic equivalence of this subclass of generators and circulant generators is proved under the assumption of absolutely summability of the generating sequence.

\section{Preliminaries} \label{prel}
\subsection{Asymptotic behaviour of matrices}

The idea of asymptotic equivalence for sequences of matrices we use through out the paper, due to Gray et al.\cite{Gray}, provides a simplified and direct path to the basic eigenvalue distribution theorems. This method is implicit and  is a special case of more general results of Grenander and Szeg\"o\cite{Gren}.

For each $n \geq 1$ let $\mathbb C^n$ and $\mathbb C^{n^2}$ be endowed with the corresponding  Euclidean inner product. We consider the Hilbert spaces of continuous linear operators on each of the above spaces with the Hilbert-Schmidt inner product $\langle A, B \rangle_{HS}=\text{tr}(A^*B)$ respectively, namely, $\Big(\mathcal M_n(\mathbb C), \langle \cdot, \cdot \rangle_{HS} \Big)$ and
$\Big(\mathcal M_{n^2}(\mathbb C), \langle \cdot, \cdot \rangle_{HS} \Big)$.

Recall that the strong norm and the Hilbert-Schmidt norm on $\mathcal M_{m}(\mathbb C) $ are defined as
\begin{align*}\Vert A \Vert= \sup\left\{\frac{\langle u,A^*Au  \rangle^{\frac{1}{2}}}{\Vert u\Vert}: u\in \mathbb C^s \right\}, \ |A|=\langle A,A \rangle_{HS}^{\frac{1}{2}},\ A \in \mathcal M_{s}(\mathbb C),\ m=n,n^2. \end{align*}

The normalized Hilbert-Schmidt norm will be denoted by $|A|_m=\frac{1}{\sqrt{m}}|A|$.

In this note, a sequence of matrices $\{A_n\}_n$ is to be understood as a sequence increasing in size, i.e., $A_n \in \mathcal M_n(\mathbb C)$ for $n\geq 1$.
\begin{definition}Two sequences of matrices $\{A_n\}_n$, $\{B_n\}_n$ are said to be asymptotically equivalent if
\begin{enumerate} \item $A_n$, $B_n$ are uniformly bounded in strong norm: \begin{align*}\Vert A_n \Vert, \Vert B_n\Vert < M <\infty.\end{align*}
\item  $A_n - B_n$ goes to zero in the normalized Hilbert-Schmidt norm:
\begin{align*} \lim_{n\to \infty} |A_n-B_n|_n=0.\end{align*}  \end{enumerate}  
The asymptotic equivalence of two sequences $\{A_n\}_n$, $\{B_n\}_n$ is commonly denoted by $A_n \thicksim B_n$.\end{definition}

The asymptotic equivalence of matrices implies that product and inverses behave similarly. Although the asymptotic equivalence is not sufficient to specify the behaviour of individual eigenvalues, it does specify the average behaviour in the sense of its distribution. This is the fundamental theorem 	concerning asymptotic eigenvalue behavior.

\begin{theorem}\label{gray-theo} Let $\{A_n\}_n$, $\{B_n\}_n$ be asymptotically equivalent  sequences of matrices with eigenvalues $\alpha_{n,k}$ and $\beta_{n,k}$, respectively. Assume that the eigenvalues of either matrix converge, i.e., $$\lim_{n\to \infty}\frac{1}{n}\sum_{k=0}^{n-1}\alpha_{n,k}^s$$ exists and is finite for  any positive integer $s$.	 Then
\begin{align*}\lim_{n\to \infty} \frac{1}{n} \sum_{k=0}^{n-1}\alpha_{n,k}^s=\lim_{n\to \infty}\frac{1}{n}\sum_{k=0}^{n-1} \beta_{n,k}^s \end{align*}\end{theorem}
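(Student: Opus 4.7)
The plan is to convert the eigenvalue power sums into traces and exploit the fact that asymptotic equivalence is preserved by multiplication. Counting algebraic multiplicities, $\sum_{k=0}^{n-1} \alpha_{n,k}^s = \text{tr}(A_n^s)$ and similarly for $B_n$, so the theorem reduces to showing $\frac{1}{n}|\text{tr}(A_n^s) - \text{tr}(B_n^s)| \to 0$ for every positive integer $s$.

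The first step is a product lemma: if $A_n \thicksim B_n$ and $C_n \thicksim D_n$, with all four sequences uniformly bounded in strong norm, then $A_n C_n \thicksim B_n D_n$. One writes $A_n C_n - B_n D_n = A_n(C_n - D_n) + (A_n - B_n) D_n$ and uses the submultiplicativity $|XY| \le \|X\|\, |Y|$ (and its variant with the roles of the two norms swapped) for the Hilbert-Schmidt norm to bound $|A_n C_n - B_n D_n|_n \le \|A_n\|\, |C_n - D_n|_n + \|D_n\|\, |A_n - B_n|_n$, which tends to $0$ by the uniform bound and the normalized HS convergence. Uniform strong-norm control of the product is immediate from $\|AC\| \le \|A\|\|C\|$. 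An induction on $s$ then yields $A_n^s \thicksim B_n^s$ for all $s \ge 1$.

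The second step is a Cauchy-Schwarz estimate: for any $X \in \mathcal M_n(\mathbb C)$, $|\text{tr}(X)| = |\langle I_n, X\rangle_{HS}| \le |I_n|\, |X| = \sqrt{n}\, |X|$, so $\frac{1}{n}|\text{tr}(X)| \le |X|_n$. Applying this to $X = A_n^s - B_n^s$ and combining with the previous step gives $\lim_{n\to\infty} \frac{1}{n}|\text{tr}(A_n^s) - \text{tr}(B_n^s)| = 0$, and the claimed equality of limits follows at once from the hypothesis that one of them exists.

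I do not foresee a serious obstacle; the argument is a sequence of elementary norm inequalities. The only point that deserves a remark is the identity $\sum_k \alpha_{n,k}^s = \text{tr}(A_n^s)$ with algebraic multiplicities, which one verifies by a Schur decomposition $A_n = U_n T_n U_n^*$ with $T_n$ upper triangular, so that $\text{tr}(A_n^s) = \text{tr}(T_n^s)$ reads off as the sum of the $s$-th powers of the diagonal entries of $T_n$, which are exactly the $\alpha_{n,k}$.
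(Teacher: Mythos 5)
Your argument is correct and is essentially the standard proof from Gray's review \cite{Gray}, which is exactly the source the paper cites for this theorem (the paper itself states Theorem~\ref{gray-theo} without proof). The two ingredients — asymptotic equivalence is preserved under products of uniformly bounded sequences, and $\frac{1}{n}|\mathrm{tr}(X)| \le |X|_n$ via Cauchy--Schwarz against $I_n$ — are precisely the lemmas Gray uses, and the trace-of-Schur-form observation correctly handles the non-normal case.
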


As a consequence of the above theorem, asymptotic equivalence implies that for any polynomial $p$ 
\begin{equation}\lim_{n\to \infty} \label{Hoff} \frac{1}{n} \sum_{k=0}^{n-1}p(\alpha_{n,k})=\lim_{n\to \infty}\frac{1}{n}\sum_{k=0}^{n-1} p(\beta_{n,k}), \end{equation}
even more, the Stone-Weirstrass theorem allows us to replace $p$ by an arbitrary function $F$ continuous  in $[-M,M]$ whenever the sequences are hermitian. In this case the sequences $\{\alpha_{n,k}\}_{k,n}, \{ \beta_{n,k}\}_{k,n}$ are said to be asymptotically equally distributed.
A stronger result was derived by Trench\cite{Trench} using the Wielandt-Hoffman theorem in which (\ref{Hoff}) is replaced by
\begin{align*}\lim_{n\to \infty} \frac{1}{n} \sum_{k=0}^{n-1}|F(\alpha_{n,k})-F(\beta_{n,k})|=0, \end{align*} where the eigenvalues are ordered noncreasingly, in this case the sequences are said to be asymptotically absolutely equally distributed.

\subsection{Circulant and Toeplitz asymptotic equivalence}

Recall that an $n \times n$ matrix $T_n=(t_{i,j})_{i,j}$  is said to be Toeplitz if $t_{i,j}=t_{i-j}$ and $C_n=(c_{i,j})_n$ is said to be circulant if $c_{i,j}=c_{j-i}$, where the subindex operation in the latter is $\mod n$. 

Following Gray in \cite{Gray}, one constructs a sequence of Toeplitz matrices $\{T_n\}_{n=1}^{\infty}$ (increasing in size) given a fixed collection $t=\{t_j\}_{j\in \mathbb Z}$, called the \textit{symbol}, satisfying some summability condition, for instance $t\in \ell^1(\mathbb Z)$. Then the sequence circulant matrices $\{\widetilde{C}_n\}_n$  is asymptotically equivalent to $\{T_n\}_{n=1}^{\infty}$ where
\begin{align}\label{gray} \tilde{c}_j=\left\{ \begin{array}{ll}t_0 & \text{ if } j=0, \\ t_{-j}+t_{n-j} & \text{ if } j=1,2,\ldots, n-1.\end{array}  \right.\end{align} 

%
	%
%
%
%

The aim of this paper is to obtain analogous results for completely positive (CP) linear operators with Toeplitz-like structure with respect to a fixed basis. This class of operators will be precisely defined in the sequel.

\subsection{Toeplitz and circulant CP maps and  GKSL generators}

Notice that any fixed orthonormal basis $\{e_i\}_i$ of $\mathbb C^n$ induces a \textit{diagonal} $\mathcal V=\{\mathcal V_l\}_{l=-(n-1)}^{n-1}$ and \textit{cyclic-diagonal} $\mathcal B=\{\mathcal B_j\}_{j=0}^{n-1}$ orthogonal  decomposition of $\mathcal M_n(\mathbb C)$ as follows.

For $j=0,1,\ldots,n-1,$  define the subspaces
\begin{align*}\mathcal V_{-j}&=\text{span}\{|e_i\rangle \langle e_{i+j}|:i=0,1,\ldots,n-1-j\}, \\
\mathcal V_{j}&	=\text{span}\{|e_{i+j}\rangle \langle e_i|:i=0,1,\ldots,n-1-j\},\\
 \mathcal B_j&=\text{span}\{|e_i\rangle \langle e_{i+j}|: i=0,1,\ldots,n-1 \}. \end{align*} 

The subindex operations in $\mathcal B_j$ are understood$\mod n$. The  $\mathcal V$-subspaces correspond to the slicing in diagonal,  subdiagonal and superdiagonal subspaces. The ordered sets  $\{l\in\mathbb Z: -(n-1)\leq l \leq n-1\}$ and $\{j\in \mathbb Z: j=0,1,\ldots,n-1\}$ will be referred as the \textit{diagonal} index set and \textit{cyclic-diagonal} index set respectively. 

 It is straightforward to verify that each family is orthogonal w.r.t. the Hilbert-Schmidt inner product, i.e.,
\begin{align} \label{1Rdecomp} \mathcal M_n(\mathbb C)=\overline{\bigoplus}_l\mathcal V_l  =\  \ \bigoplus_{j=0}^{n-1} \ \  \mathcal B_j.\end{align}
To simplify our notation we use the symbol $\overline{\bigoplus}_l$ to denote $\bigoplus_{l=-(n-1)}^{n-1} $, i.e., when the index set in the sum is the \textit{diagonal} index set.

Denote by $S$ the left shift operator (w.r.t. the orthonormal basis fixed above)  $$ Se_i=\left\{\begin{array}{lc} e_{i-1} & \mbox{ if $0<i\leq n-1$}, \\ 0 & \mbox{other,} \end{array} \right.$$ 
and $J$ be the \textit{cyclic} left shift operator $Je_i=e_{i-1}$. 
Subindex operations for $J$ are understood $\mod n.$

In recent years the invariance of $\mathcal V$ or $\mathcal B$ of GKSL generators has proven useful in the study of some QMS generators \cite{CF,BQ,BCQ}, this strategy can reduce the dimension in some cases from $n^2$ dimensional to an $n$ dimensional problem. 
In Theorem 3.2 and Corollary 3.1 of \cite{FQ}, the authors prove that the $\mathcal V$-invariance of a CP operator $\Phi$ induces the existence of a Kraus representation where each Kraus operator $V_l$ belongs to some subspace $\mathcal V_{m_l}$,  and thus it can be written either as 
\begin{align*}S^{m_l} M_{m_l} \ \text{ or }\  S^{*{m_l}} M_{m_l} \end{align*}
for some integer $m_l\geq 0$ and some multiplication operator $	M_{m_l}$. 

Similarly, the $\mathcal B$-invariance of $\Phi$ induces a Kraus representation in terms of the cyclic translation $J$. We state the proposition without proof since it consists of a simplified version of the original results pointed above.

\begin{proposition}Let $\Phi$ be a CP operator on $\mathcal M_n(\mathbb C)$ such that $\Phi(\mathcal B_l)\subseteq \mathcal B_l$ for all $l=0,1,\ldots,n-1$. Then there exists a Kraus representation $\Phi(x)=\sum_{l=0}^{n-1} L^*_l x L_l$ in which each $L_l$ can we written in one of the following forms:
\begin{align} \label{M-circ} J^mM_{m_l} \text{ or } J^{*m}M_{m_l} \end{align} for some integer $0 \leq m \leq n-1$ and some multiplication operator $M_{m_l}$. \end{proposition}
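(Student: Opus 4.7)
The plan is to block-diagonalise the Choi matrix of $\Phi$ according to the cyclic-diagonal decomposition and then extract Kraus operators from each block separately. Write $C_\Phi = \sum_{i,j} |e_i\rangle\langle e_j| \otimes \Phi(|e_i\rangle\langle e_j|)$ for the Choi matrix; complete positivity of $\Phi$ is equivalent to $C_\Phi \geq 0$, and any orthogonal spectral decomposition $C_\Phi = \sum_\alpha |v_\alpha\rangle\langle v_\alpha|$ produces a Kraus representation of $\Phi$ after matricising the vectors $v_\alpha$ through the inverse of the vectorisation $|e_i\rangle\langle e_j| \mapsto e_i\otimes e_j$.

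The structural step is to transfer the cyclic-diagonal decomposition to $\mathbb{C}^n \otimes \mathbb{C}^n$ via the orthogonal subspaces $\widetilde{\mathcal{B}}_m = \mathrm{span}\{e_i \otimes e_{i+m} : i = 0,\ldots,n-1\}$ (indices $\mod n$), observing that vectorisation identifies $\mathcal{B}_m$ with $\widetilde{\mathcal{B}}_m$ and $\mathbb{C}^n \otimes \mathbb{C}^n = \bigoplus_{m=0}^{n-1} \widetilde{\mathcal{B}}_m$. I would then verify that the hypothesis $\Phi(\mathcal{B}_l) \subseteq \mathcal{B}_l$ forces $C_\Phi$ to leave each $\widetilde{\mathcal{B}}_m$ invariant: expanding $\Phi(|e_i\rangle\langle e_j|) = \sum_k \phi^{(i,j)}_k |e_k\rangle\langle e_{k+j-i}|$ and applying $C_\Phi$ to a basis vector $e_{i'}\otimes e_{j'}$, the only surviving summands have the form $e_i \otimes e_{j'-i'+i}$, and all of them lie in $\widetilde{\mathcal{B}}_{j'-i'}$.

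With the block structure in hand, the spectral theorem applied to each positive finite-dimensional block of $C_\Phi$ on $\widetilde{\mathcal{B}}_m$ produces orthonormal eigenvectors $v_{m,\alpha} \in \widetilde{\mathcal{B}}_m$ with eigenvalues $\lambda_{m,\alpha} \geq 0$; rescaling, matricising, and adjoining (to match the convention $\Phi(x) = \sum L_l^* x L_l$) yields Kraus operators $L_{m,\alpha}$ lying in $\mathcal{B}_{n-m}$. A short computation in the basis $\{|e_i\rangle\langle e_{i+m}|\}_i$ confirms that every element of $\mathcal{B}_m$ has the form $MJ^m = J^m M'$ for suitable diagonal multiplication operators $M, M'$; using $J^* = J^{n-1}$, the alternative form $J^{*(n-m)} M''$ is equally available, matching either branch of (\ref{M-circ}) after an appropriate choice of $m_l$. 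The main obstacle will be careful bookkeeping of the $\mod n$ arithmetic in the product $|e_i\rangle\langle e_j| \otimes |e_k\rangle\langle e_{k+j-i}|$, so that the claimed block-diagonality of $C_\Phi$ lines up precisely with $\bigoplus_m \widetilde{\mathcal{B}}_m$; once this is done, the conclusion follows directly from the Choi-Kraus correspondence.
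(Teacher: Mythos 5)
The paper does not actually supply a proof of this proposition; it is stated without proof, with the argument deferred to Theorem~3.2 and Corollary~3.1 of~\cite{FQ} (which treat the $\mathcal V$-invariant case and the cyclic version by ``a simplified version'' of that argument). So there is no internal proof to compare against, but your Choi-matrix reconstruction is a sound and standard route to the result. The crucial structural step is correct: writing $\Phi(|e_i\rangle\langle e_j|)=\sum_k\phi^{(i,j)}_k|e_k\rangle\langle e_{k+j-i}|$ (the only form compatible with $\mathcal B$-invariance), one finds $C_\Phi(e_{i'}\otimes e_{j'})=\sum_i\phi^{(i,i')}_{i+j'-i'}\,e_i\otimes e_{i+j'-i'}$, so $C_\Phi$ does preserve each $\widetilde{\mathcal B}_m$; positivity of $C_\Phi$ restricted to $\widetilde{\mathcal B}_m$ then gives eigenvectors whose matricisations lie in a single $\mathcal B_{m}$, and your identification $\mathcal B_m=\{J^mM:M\ \text{diagonal}\}=\{J^{*(n-m)}M'':M''\ \text{diagonal}\}$ is exactly what is needed to land in the form~(\ref{M-circ}).

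Two small points are worth flagging. First, a bookkeeping caveat: with the vectorisation $|e_i\rangle\langle e_j|\mapsto e_i\otimes e_j$ the Choi decomposition $C_\Phi=\sum_\alpha|v_\alpha\rangle\langle v_\alpha|$ yields $\Phi(x)=\sum_\alpha(\bar K_\alpha)^*x\,\bar K_\alpha$, where $K_\alpha$ is the matricisation of $v_\alpha$; complex conjugation (not the adjoint) preserves the sparsity pattern, so the Kraus operators actually lie in $\mathcal B_m$ itself rather than $\mathcal B_{n-m}$. This does not affect the conclusion since both families consist of operators of the form in~(\ref{M-circ}), but the parenthetical ``adjoining'' step should be replaced by a conjugation, or one should simply note that the convention does not matter for the final form. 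Second, your argument produces a number of Kraus operators equal to $\mathrm{rank}\,C_\Phi$, which for a general $\mathcal B$-invariant CP map can be as large as $n^2$ (each of the $n$ blocks is an $n\times n$ positive matrix of possibly full rank); the proposition's indexing $\sum_{l=0}^{n-1}$ suggests exactly $n$ operators, which your argument does not deliver and which is in fact false in general --- it holds only when each block has rank one, e.g.\ for the circulant CP maps of Definition~\ref{def} where every multiplication operator is scalar. This is a defect of the statement rather than of your proof, but you should not be left with the impression that your construction matches the cardinality claim.
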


Throughout the rest of the paper we will exclusively work with the case when the multiplication operators above are multiples of the identity.

\begin{definition} \label{def} Let $\Phi:\mathcal M_n(\mathbb C)\longrightarrow \mathcal M_n(\mathbb C)$ be a completely positive linear operator. \begin{enumerate}
\item  $\Phi$ is said to be a CP Toeplitz operator (w.r.t. to the basis $\{e_i\}_i$)  if it has a representation of the form 
\begin{align} \label{CP-toep} \Phi(x) = \sum_{j=0}^{n-1} t_{j} S^{*j} x S^j+ \sum_{j=1}^{n-1} t_{-j} S^j x S^{*j}, \ \ \ \ t_j \geq 0.  \end{align}

\item  $\Phi$ is said to be a CP circulant operator (w.r.t. to the basis $\{e_i\}_i$) if it has a representation of the form 
\begin{align} \Phi(x) = \sum_{j=0}^{n-1} c_{n-j} J^{*j} x J^j, \ \ \ \ c_j\geq 0.\end{align}\end{enumerate}
\end{definition}

Recall that a quantum Markov semigroup on a finite dimensional algebra is always uniformly continuous and its infinitesimal generator $ \mathcal L$ is a bounded operator characterized by a particular form, usually called canonical Gorini-Kossakowski-Sudarshan-Lindblad (GKSL) form:
\[
\mathcal L(x)=\Phi(x) +G^*x+xG\]
where $\Phi$ is a CP operator and $G=-\frac{1}{2}\Phi(\unit_n) -iH$ with $H=H^*\in \mathcal M_n(\mathbb C)$. This representation is not unique. We shall consider GKSL generators with representations such that the CP part is Toeplitz or Circulant as follows.

\begin{definition}\label{GKSL-def}
Let   $\mathcal L$ be the GKSL generator of a uniformly continuous QMS.
\begin{enumerate}
\item  $\mathcal L$ is said to be a Toeplitz QMS generator if the diagonal subspaces $\mathcal V$ are invariant for $\mathcal L$ and there exists a representation such that its CP (i.e., dissipative) part is Toeplitz.

\item  $\mathcal L$ is said to be a circulant QMS generator if the diagonal subspaces $\mathcal B$ and there exists a representation such that its CP (i.e., dissipative) part is circulant.
\end{enumerate}
\end{definition}

The class of Circulant QMS has been extensively studied in \cite{BC,BQ,BCQ}. 

\begin{rem} The $\mathcal V$-invariance ($\mathcal B$-invariance) alone is not enough to have CP Toeplitz (circulant) operators. The additional condition on the multiplication operators  is needed to have a CP Toeplitz (circulant) behaviour.
\end{rem}

\subsection{Toeplitz and circulant induced vectorizations}

Vectorization expresses through coordinates an isomorphism $\mathbb C^{n} \otimes \mathbb C^n\cong \mathbb C^{n^2}$ which is a unitary transformation compatible with the Euclidian inner product of $\mathbb C^{n^2}$.

The most common vectorization used in the literature consists in stacking  each column of $x=(x_{i,j})_{ij}\in \mathcal M_n(\mathbb C)$ into an $n^2$ coordinate vector
\begin{equation*}\text{vec}(x)=(x_{0,0},x_{1,0},\ldots,x_{n-1,0},x_{0,1},\ldots,x_{n-1,1},\ldots,x_{0,n-1},\ldots,x_{n-1,n-1})^T \in \mathbb C^{n^2}.\end{equation*}

We will use two different vectorization isomorphisms $v$ and $b$ with the difference being the order in which the elements are arranged in a vector. This order is  directly induced by the structure of the $\mathcal V$ and $\mathcal B$ subspaces, respectively.

\noindent For  each $-(n-1) \leq l \leq n-1$ let $\{e_i^l\}_i$ be the canonical basis of $\mathbb C^{n-|l|}$.

Define $  v_{l}: \mathcal V_{l} \longrightarrow\mathbb C^{n-|l|}$ as

$v_l:\left\{\begin{array}{c}|e_i \rangle \langle e_{i-l}| \longmapsto e_i^{l}  \ \ \text{ if } l<0, \\ |e_{i+l} \rangle \langle e_i| \longmapsto e_i^{l} \ \   \text{  if }\ l\geq 0, \end{array}\right.$

and for  each $0 \leq j \leq n-1$ define $ b_{j}:\mathcal B_n \longrightarrow \mathbb C^n$ as $|e_i \rangle \langle e_{i+j}|\longmapsto e_i.$
 
 Each $v_l$ and $b_{j}$ is an isometry. 

Let $v$ and $b$ be the isomorphisms between $\mathcal M_n(\mathbb C)$ and $\mathbb C^{n^2}$ induced by the smaller isomorphisms $\{v_l\}_l$, $\{b_{j}\}_j$ respectively, so
\begin{align*}v(x)&=v_0(x_0)\oplus \bigoplus_{l=1}^{n-1} v_{-l}(x_{-l})\oplus v_l(x_l)\ \ \ \ \text{ and }\ \ \ 
b(x)=b_0(\mathrm x_0)\bigoplus_{j=1}^{n-1} b_j(\mathrm x_j),\end{align*}
where $x=\bigoplus_lx_l=\bigoplus_j\mathrm{x}_j,$ $ x_l\in \mathcal V_l, \mathrm{x}_j \in \mathcal B_j,$  are the unique orthogonal decomposition of $x$ w.r.t. to $\mathcal V$ and $\mathcal B$ respectively.

Thus $v(x)$ is the vectorization of $x$ which rearranges its elements  w.r.t. to the \textit{diagonal} subspaces in the \textit{diagonal ordering} and $b(x)$ is the  one that rearranges them w.r.t. to the \textit{cyclic-diagonal} subspaces. 

It is straightforward to verify that the vectorizations $v$ and $b$ are compatible with the Euclidian (HS) inner product of $\mathbb C^{n^2}$ as well,
$$\langle x , y \rangle_{HS} =\tr{x^*y}= \langle v(x),v(y) \rangle_{\mathbb C^{n^2}}=\langle b(x),b(y) \rangle_{\mathbb C^{n^2}} \ \text{ for } x,y \in \mathcal M_n(\mathbb C).$$

We denote with $\Phi \cong A$
whenever a matrix $A\in \mathcal M_{n^2}(\mathbb C)$ represents a linear operator $\Phi$ acting on $\mathcal M_n(\mathbb C)$ w.r.t. the reordering of the canonical basis $\{|e_i\rangle \langle e_j |\}_{i,j}$ induced by the \textit{diagonal} or \textit{cyclic-diagonal} index sets.	

\subsection{Circulant C.P. maps}\label{circ-prop}

Circulant CP operators appear as the dissipative part of the GKSL generators of circulant Quantum Markov Semigroups (2. in Def. \ref{GKSL-def}) in \cite{BQ}. Some of the spectral properties of these generators have been studied in  \cite{BC} together with estimations for the spectral gap. 
Every CP circulant operator $\tilde{\Phi}$ (Def. \ref{def}) has an associated circulant matrix $C^{(n)}=(c_{j-i})_{ij}$. The properties satisfied by a circulant GKSL generator and its associated circulant matrix can be found in \cite{BQ}. We recall the main property we shall use in the sequel.

\begin{proposition} \label{iso-C} A circulant CP map $\tilde{\Phi}$ with asociated circulant matrix $C^{(n)}$ satisfies
\[\tilde{\Phi}\cong \bigoplus_{j=0}^{n-1}  C^{(n)} = \unit_n \otimes C^{(n)},\]
therefore  $b\left(\tilde{\Phi}(x)\right)=( \unit_n \otimes C^{(n)} ) b(x)$ for all $x\in \mathcal M_n(\mathbb C).$

 As a consequence  $ \Vert \tilde{\Phi} \Vert= \Vert  \unit_n \otimes C^{(n)}  \Vert$ and $ |\tilde{\Phi}|_{n^2}=| \unit_n \otimes C^{(n)}   |_{n^2}.$

\end{proposition}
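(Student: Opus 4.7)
The plan is to verify the block structure of $\tilde\Phi$ with respect to the $b$-vectorization by a direct computation on the canonical generators of each $\mathcal B_j$, and then identify each block with the circulant matrix $C^{(n)}$.

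First I would fix indices $j\in\{0,1,\dots,n-1\}$ and $i\in\{0,\dots,n-1\}$, and compute the action of $\tilde\Phi$ on the basic element $|e_i\rangle\langle e_{i+j}|\in\mathcal B_j$ (indices mod $n$). Using the cyclic shift relations $J^ke_l=e_{l-k}$ and $J^{*k}e_l=e_{l+k}$ (mod $n$), a one-line calculation gives
\begin{equation*}
J^{*k}|e_i\rangle\langle e_{i+j}|J^{k}=|e_{i+k}\rangle\langle e_{(i+j)+k}|,
\end{equation*}
which already lies in $\mathcal B_j$. Summing against the coefficients $c_{n-k}$ from Definition \ref{def}, and then reindexing $m=i+k\pmod n$, I obtain
\begin{equation*}
\tilde\Phi\bigl(|e_i\rangle\langle e_{i+j}|\bigr)=\sum_{k=0}^{n-1}c_{n-k}\,|e_{i+k}\rangle\langle e_{i+j+k}|=\sum_{m=0}^{n-1}c_{i-m\,\bmod n}\,|e_m\rangle\langle e_{m+j}|.
\end{equation*}
This proves that each $\mathcal B_j$ is invariant under $\tilde\Phi$, so in the $b$-ordering $\tilde\Phi$ is block-diagonal with $n$ blocks of size $n$.

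Next I would identify the $j$-th block. Applying $b_j$ sends $|e_m\rangle\langle e_{m+j}|\mapsto e_m$, so
\begin{equation*}
b_j\!\left(\tilde\Phi\bigl(|e_i\rangle\langle e_{i+j}|\bigr)\right)=\sum_{m=0}^{n-1}c_{i-m\,\bmod n}\,e_m,
\end{equation*}
which is exactly $C^{(n)}e_i$ because the $(m,i)$-entry of $C^{(n)}=(c_{j-i})_{ij}$ is $c_{i-m\bmod n}$. Since this holds for every $i$, the block on $\mathcal B_j$ coincides with $C^{(n)}$ for \emph{every} $j$. Hence $\tilde\Phi\cong\bigoplus_{j=0}^{n-1}C^{(n)}$, and the canonical identification of a block-diagonal matrix with $n$ identical blocks yields $\bigoplus_{j=0}^{n-1}C^{(n)}=\unit_n\otimes C^{(n)}$. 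The intertwining identity $b(\tilde\Phi(x))=(\unit_n\otimes C^{(n)})b(x)$ follows by extending the computation above by linearity through the orthogonal decomposition $x=\bigoplus_j\mathrm x_j$.

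For the norm assertions, I would simply invoke that $b:\mathcal M_n(\mathbb C)\to\mathbb C^{n^2}$ is a unitary isomorphism (it is an orthogonal direct sum of the isometries $b_j$, as established earlier in the preliminaries), so conjugation by $b$ preserves both the operator (strong) norm and the Hilbert--Schmidt norm. Since the underlying dimension in the normalized HS norm is $n^2$ on both sides, the identities $\|\tilde\Phi\|=\|\unit_n\otimes C^{(n)}\|$ and $|\tilde\Phi|_{n^2}=|\unit_n\otimes C^{(n)}|_{n^2}$ are immediate. The only genuine difficulty is bookkeeping the cyclic indices through the reindexing $m=i+k\bmod n$ and matching the shift $c_{n-k}\mapsto c_{i-m\bmod n}$ against the convention $C^{(n)}_{mi}=c_{i-m}$; once this is done correctly, the rest is functorial.
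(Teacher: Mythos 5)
Your proof is correct, and the index bookkeeping works out: with the paper's convention $Je_i=e_{i-1}$ (mod $n$) one gets $J^{*k}|e_i\rangle\langle e_{i+j}|J^k=|e_{i+k}\rangle\langle e_{i+j+k}|$, and the reindexing $m=i+k$ turns $c_{n-k}$ into $c_{i-m\bmod n}$, which is exactly the $(m,i)$-entry of $C^{(n)}=(c_{j-i})_{ij}$. The block-diagonal identification $\bigoplus_{j=0}^{n-1}C^{(n)}=\unit_n\otimes C^{(n)}$ and the norm equalities via the unitarity of $b$ are handled correctly.

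The paper itself does not prove this proposition; it cites \cite{BQ} and states the result as a recollection. Your argument is nonetheless exactly the style of proof the paper uses for the Toeplitz analogue (the lemma preceding Theorem \ref{iso-T} and the norm argument in its proof): compute the action on the canonical generators $|e_i\rangle\langle e_{i+j}|$ of each invariant subspace, read off the block as $C^{(n)}$, and transfer norms through the unitary vectorization. So this is the expected proof, filled in where the paper delegated to a reference.
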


\section{CP Toeplitz maps} \label{sec-toep}

In this section we shall provide analogous results for CP Toeplitz maps in terms of an associated Toeplitz matrix and the subspaces $\mathcal V$.

Let $\Phi$  be a CP Toeplitz map (Def.\ref{def}, \textit{1.}) and let $T^{(n)}_0=(t_{i-j})_{i,j}$ be  the $n \times n$  Toeplitz matrix whose entries are the scalars from the constant multiplication operators in $\Phi$. The following lemma deals with the restriction of $\Phi$ to the $\mathcal V$-subspaces and its relation to some submatrices of $T_0^{(n)}$. It is the analogous version of Proposition \ref{iso-C} to the case of Toeplitz CP maps.

\begin{lemma}  Let $\Phi$ be a C.P. Toeplitz operator. 
 The restriction of $\Phi$ to the subspace $\mathcal V_l$ is given by the $n-|l| \times n-|l|$ leading principal  submatrix $T^{(n)}_{|l|}$ of $T^{(n)}_0$,
\[T^{(n)}_{|l|}=\sum_{\substack{i,k=0}}^{n-1-|l|} \langle e_i, T^{(n)}_0 e_k \rangle	 |e_i\rangle \langle e_k|, \] that is, for $j=0,1,\ldots,n-1,$
\begin{align*} &v_{-j}\left(\Phi(|e_i \rangle \langle e_{i+j}|)\right)= T^{(n)}_{j}e_i \ \ i=0,1,\ldots,n-1-j,\\
&v_j\left(\Phi(|e_{i+j} \rangle \langle e_i|)\right)= T^{(n)}_{j}e_i \ \ i=0,1,\ldots,n-1-j. \end{align*} Therefore by linearity
\begin{eqnarray}\label{phi_restriction} v_l(\Phi|_{\mathcal V_l}(x))=	T^{(n)}_{|l|} v_l(x)\  \text{ for any} \ x\in\mathcal M_n(\mathbb C) \text{ and } l=-(n-1),\ldots,0,\ldots,n-1.\end{eqnarray}
\end{lemma}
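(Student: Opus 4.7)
By the orthogonal decomposition $\mathcal M_n(\mathbb C) = \overline{\bigoplus}_l \mathcal V_l$ and linearity of $\Phi$ and of $v_l$, identity (\ref{phi_restriction}) reduces to verifying the two displayed equations for the canonical basis elements $|e_i\rangle\langle e_{i+j}|$ (for $l=-j$) and $|e_{i+j}\rangle\langle e_i|$ (for $l=j$). The two cases are symmetric, so my plan is to treat $x = |e_i\rangle\langle e_{i+j}|$ with $0 \leq i \leq n-1-j$ in detail and indicate at the end that the other is obtained by transposition.

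The key computation is to apply the CP Toeplitz representation (\ref{CP-toep}) term by term. Using $S^k e_a = e_{a-k}$ when $a \geq k$ and zero otherwise, I would compute
$$S^{*k}|e_i\rangle\langle e_{i+j}|S^k = |e_{i+k}\rangle\langle e_{i+j+k}| \quad \text{for } 0 \leq k \leq n-1-i-j,$$
and
$$S^k|e_i\rangle\langle e_{i+j}|S^{*k} = |e_{i-k}\rangle\langle e_{i+j-k}| \quad \text{for } 1 \leq k \leq i,$$
with both equal to zero outside these ranges. Two observations are then immediate: every resulting rank-one operator still lies in $\mathcal V_{-j}$ (so this simultaneously establishes the tacit $\mathcal V$-invariance of $\Phi$), and with the change of summation index $m = i+k$ in the first sum and $m = i-k$ in the second, the ranges $\{i,\ldots,n-1-j\}$ and $\{0,\ldots,i-1\}$ tile $\{0,\ldots,n-1-j\}$ without overlap, so the two partial sums merge into
$$\Phi(|e_i\rangle\langle e_{i+j}|) = \sum_{m=0}^{n-1-j} t_{m-i}\, |e_m\rangle\langle e_{m+j}|.$$

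Applying $v_{-j}$, which sends $|e_m\rangle\langle e_{m+j}|$ to the canonical basis vector $e_m^{-j} \in \mathbb C^{n-j}$, the right-hand side becomes $\sum_{m=0}^{n-1-j} t_{m-i}\, e_m^{-j}$. By the definition of the leading principal submatrix, this is precisely $T^{(n)}_j e_i^{-j}$, which proves the first displayed equation. The case $l = j$ is handled identically, with $|e_{i+j}\rangle\langle e_i|$ in place of $|e_i\rangle\langle e_{i+j}|$ and the same index bookkeeping yielding the coefficient $t_{m-i}$ in front of $|e_{m+j}\rangle\langle e_m|$.

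The only real obstacle is purely combinatorial: confirming that the positive-shift indices $\{0,\ldots,n-1-i-j\}$ and the negative-shift indices $\{1,\ldots,i\}$, after the reindexing $m=i\pm k$, reassemble into the single contiguous range $\{0,\ldots,n-1-j\}$ with the correct coefficient $t_{m-i}$ and without double counting. Once this is checked the identification with the entry $\langle e_m, T_0^{(n)} e_i\rangle = t_{m-i}$ of the principal submatrix $T^{(n)}_j$ is automatic.
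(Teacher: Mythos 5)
Your proof is correct and follows essentially the same route as the paper: compute $\Phi$ on the rank-one basis elements of $\mathcal V_{-j}$ directly from the Kraus representation (\ref{CP-toep}) using the shift identities $S^{*k}|e_i\rangle\langle e_{i+j}|S^k = |e_{i+k}\rangle\langle e_{i+j+k}|$ and $S^k|e_i\rangle\langle e_{i+j}|S^{*k} = |e_{i-k}\rangle\langle e_{i+j-k}|$, reindex, and recognize the result as the action of the leading principal submatrix $T^{(n)}_j$. Your index bookkeeping (in particular, the explicit verification that the two shifted ranges tile $\{0,\ldots,n-1-j\}$ without overlap) is if anything more carefully recorded than in the paper's version.
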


\begin{proof} 
Let $j	\geq 0$ and let $|e_i \rangle \langle e_{i+j}|$ be any of the basic elements of $\mathcal V_{-j}$. Then, explicit computations using $S$ and $S^*$ yield
\begin{align*}  \Phi(|e_i \rangle \langle e_{i+j}|)&=\sum_{\substack{r=0\\0\leq i+r\leq n-1-j} }^{n-1} t_r |e_{i+r}\rangle \langle e_{i+r+j}|+\sum_{\substack{r=1\\ 0 \leq i-r\leq n-1-j}}^{n-1} t_{-r}|e_{i-r} \rangle \langle e_{i-r+j}|\\
&=\sum_{\substack{r=0\\0\leq i+r \leq n-1-j}}^{n-1}t_r  v_{-j}^{-1}(e^j_{i+r})+\sum_{\substack{r=0\\0\leq i-r\leq n-1-j}}^{n-1}t_{-r} v_{-j}^{-1}( e^j_{i-r})  \\
&= v_{-j}^{-1}\left( \sum_{\substack{r=0\\0\leq i+r \leq n-1-j}}^{n-1-j}t_r  e^j_{i+r}+\sum_{\substack{r=0\\0\leq i-r\leq n-1-j}}t_{-r}  e^j_{i-r}  \right)=v_{-j}^{-1}(T_j e^j_i), \end{align*}

where the fact that  $v_{-l}(|e_k\rangle \langle e_{k+j}|)=e_k$ was used in the second row and for the last equality follows since
\begin{align*} T^{(n)}_{j}=\sum_{r=0}^{n-1-j}  \left( \sum_{\substack{k=0\\0\leq k+r \leq n-1-j}}^{n-1-j} t_{-r}| e^j_k \rangle \langle e^j_{k+r}| + t_r  | e^j_{k+r} \rangle \langle e^j_k|  \right). \end{align*}
This proves (\ref{phi_restriction}) for $\mathcal V_{-j}$. The remaining case is analogous.

\end{proof}

In virtue of the following theorem, we will say $T^{(n)}_0$ is the associated Toeplitz matrix of $\Phi$. 

\begin{theorem} \label{iso-T} A Toeplitz CP map $\Phi$ with associated Toeplitz matrix $T^{(n)}_0$ satisfies

\begin{align*} \Phi\cong \displaystyle \overline{\bigoplus_l} T^{(n)}_l \ \ \ \text{ and }\ \ \ v\left(\Phi(x)\right)=\left(\overline{\bigoplus_l} T^{(n)}_l\right) v(x)\  \text{ for all } x\in \mathcal M_n(\mathbb C)\end{align*}
w.r.t. the diagonal ordering.

As a consequence the norms satisfy $ \displaystyle\Vert \Phi \Vert= \left\Vert \overline{\bigoplus_l} T^{(n)}_l \right\Vert$ and $\big|\Phi\big|_{n^2}=\displaystyle \left|\overline{\bigoplus_l}	 T^{(n)}_l \right|_{n^2}.$

\end{theorem}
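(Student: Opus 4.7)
The plan is to piece together the previous lemma with the orthogonal decomposition $\mathcal M_n(\mathbb C)=\overline{\bigoplus}_l \mathcal V_l$ from (\ref{1Rdecomp}) and the definition of the vectorization $v$. The previous lemma has already done the bulk of the work: it established that $\Phi$ leaves each $\mathcal V_l$ invariant and that the restriction $\Phi|_{\mathcal V_l}$ is represented through $v_l$ by the leading principal submatrix $T^{(n)}_{|l|}$. What is left is essentially a bookkeeping exercise to assemble these restrictions into a single block-diagonal matrix.

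First I would take an arbitrary $x\in\mathcal M_n(\mathbb C)$ and write its unique orthogonal decomposition $x=\bigoplus_l x_l$ with $x_l\in\mathcal V_l$. Applying $\Phi$ and using the invariance of each $\mathcal V_l$ yields $\Phi(x)=\bigoplus_l \Phi|_{\mathcal V_l}(x_l)$, again an orthogonal decomposition w.r.t.\ $\mathcal V$. By the definition of $v$ as the direct sum of the isometries $v_l$, this gives
\begin{align*}
v(\Phi(x))=\bigoplus_l v_l\bigl(\Phi|_{\mathcal V_l}(x_l)\bigr)=\bigoplus_l T^{(n)}_{|l|}\,v_l(x_l)=\left(\overline{\bigoplus_l}\,T^{(n)}_{|l|}\right)v(x),
\end{align*}
where the second equality is exactly (\ref{phi_restriction}) and the third is the definition of the action of a block-diagonal matrix on the corresponding direct sum decomposition of $\mathbb C^{n^2}$. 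This is precisely the claimed matrix representation $\Phi\cong\overline{\bigoplus_l}T^{(n)}_l$ with respect to the diagonal ordering.

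For the norm identities I would invoke the fact, recorded just before the theorem, that $v$ is an isometric isomorphism from $(\mathcal M_n(\mathbb C),\langle\cdot,\cdot\rangle_{HS})$ onto $\mathbb C^{n^2}$ with its Euclidean inner product. Consequently, conjugation by $v$ preserves both the operator norm induced by the HS norm (which is the strong norm of $\Phi$ as an element of $\mathcal M_{n^2}(\mathbb C)$ via the vectorization) and the Hilbert--Schmidt norm. The two equalities $\Vert\Phi\Vert=\Vert\overline{\bigoplus_l}T^{(n)}_l\Vert$ and $|\Phi|_{n^2}=|\overline{\bigoplus_l}T^{(n)}_l|_{n^2}$ then follow immediately from the intertwining $v\circ\Phi=(\overline{\bigoplus_l}T^{(n)}_l)\circ v$.

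I do not expect a serious obstacle: the previous lemma reduces the problem to assembling the blockwise action into a single direct sum, and the compatibility of $v$ with the Hilbert--Schmidt structure translates this into the norm equalities. The only point requiring a little care is the notational convention that $T^{(n)}_l$ for negative indices is to be read as $T^{(n)}_{|l|}$, so that the direct sum ranges over the full diagonal index set $-(n-1)\le l\le n-1$ in a manner consistent with the decomposition $\overline{\bigoplus}_l\mathcal V_l$.
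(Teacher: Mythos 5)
Your proposal is correct and follows essentially the same route as the paper: the previous lemma supplies the intertwining relation blockwise, and the unitarity of $v$ with respect to the Hilbert--Schmidt inner product transports both the operator norm and the normalized HS norm, giving the two norm equalities. The paper's proof writes the same argument out as two short chains of inner-product equalities (one for $\Vert\cdot\Vert$, one for $|\cdot|_{n^2}$), and your remark that $T^{(n)}_l$ for $l<0$ is read as $T^{(n)}_{|l|}$ matches the paper's explicit note that $T^{(n)}_j=T^{(n)}_{-j}$.
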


\begin{proof}
The first part follows from the last proposition. For the strong norm it is straightforward to see that $\displaystyle \Big\Vert \Phi(x)\Big\Vert^2 = \langle \Phi(x),\Phi(x) \rangle_{HS} = \Big \langle v(\Phi(x)),v(\Phi(x)) \Big\rangle_{\mathbb C^{n^2}} $

 $ \ \  \ \ \ \  \ \ \  \ \ \ \  \ \ \ \ \ \ \ \  \ \ \ \  \ \ \ \ \ \displaystyle=\Big\langle \overline{\bigoplus_l} T^{(n)}_l v(x),\overline{\bigoplus_l} T^{(n)}_l v(x) \Big \rangle_{\mathbb C^{n^2}} = \Big \Vert \overline{\bigoplus_l} T^{(n)}_l v(x) \Big \Vert ^2 .$
 The desired result follows since $\Vert x \Vert=\Vert v(x) \Vert$.

For the HS norm, let  $e_{ij}=|e_i\rangle \langle e_j|$, then since $v$ is unitary,
\begin{align*}|\Phi|^2_{n^2}&=\frac{1}{n^2}\sum_{i,j} \Big\langle \Phi(e_{ij}),\Phi(e_{ij}) \Big\rangle_{HS}=\frac{1}{n^2}\sum_{i,j} \Big\langle v(\Phi(e_{ij})),v(\Phi(e_{ij})) \Big\rangle_{\mathbb C^{n^2}}  \\&=\frac{1}{n^2} \sum_{i,j} \Big\langle \overline{\bigoplus_l} T^{(n)}_l v(e_{ij}), \overline{\bigoplus_l} T^{(n)}_l v(e_{ij}) \Big\rangle_{\mathbb C^{n^2}} = \Big|\overline{\bigoplus_l} T^{(n)}_l\Big|_{n^2}^2.\end{align*} Again, reordering the basis the conclusion follows.

\end{proof}

The last theorem shows that with the proper reordering of the basis, any Toeplitz CP operator $\Phi$ on $\mathcal M_n(\mathbb C)$ is represented by $\overline{\bigoplus_l} T^{(n)}_l$, a $n^2 \times n^2$ block diagonal matrix  with $n-|j| \times n-|j|$ Toeplitz blocks. Each block is a leading principal submatrix of $T^{(n)}_0$ and $T^{(n)}_j=T^{(n)}_{-j}$.

\begin{rem} $\overline{\bigoplus_l} T^{(n)}_l$ is a block diagonal matrix with Toeplitz blocks but it is not a Toeplitz $n^2 \times n^2$  matrix. As far as we know this Toeplitz-like matrix structure has not been studied before.\end{rem}

Denoting the spectrum of $A$ by $\sigma(A)$, an immediate corollary follows.
\begin{corollary}The following properties hold:
\begin{itemize}
\item[a)]
\begin{equation}\label{block-spectrum} \sigma(\Phi)=\bigcup_{j=0}^{n-1} \sigma(T^{(n)}_j).\end{equation} 
\item[b)] Each $\lambda \in \{t_0, \ t_0 -\sqrt{t_1 t_{-1}},\  t_0 +\sqrt{t_1 t_{-1}} \}$ is always an eigenvalue of $\Phi$ with associated eigenvectors given by
\begin{align*}\lambda= t_0 -\sqrt{t_1 t_{-1}} &\longrightarrow \left\{\begin{array}{c} -\sqrt{t_{-1}}|e_{n-2}\rangle \langle e_0|+\sqrt{t_1}|e_{n-1}\rangle \langle e_1| \\ -\sqrt{t_{-1}}|e_0\rangle \langle e_{n-2}|+\sqrt{t_1}|e_1\rangle \langle e_{n-1}| \end{array} \right. \\ \lambda= t_0 +\sqrt{t_1 t_{-1}} &\longrightarrow \left\{\begin{array}{c} \ \  \sqrt{t_{-1}}|e_{n-2}\rangle \langle e_0|+\sqrt{t_1}|e_{n-1}\rangle \langle e_1| \\ \ \  \sqrt{t_{-1}}|e_0\rangle \langle e_{n-2}|+\sqrt{t_1}|e_1\rangle \langle e_{n-1}|. \end{array}\right.  \\
\lambda=t_0 \ \ \ \ \ \ \ \ \ \ \ \ \ &\longrightarrow \left\{ \begin{array}{c}|e_{0}\rangle \langle e_{n-1}| \\ |e_{n-1}\rangle \langle e_0|  \\  \end{array} \right.   \end{align*}

\end{itemize}\end{corollary}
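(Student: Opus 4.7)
The plan is to unpack the block-diagonal representation of Theorem~\ref{iso-T} and read off the spectral information directly. Since $\Phi \cong \overline{\bigoplus_l} T^{(n)}_l$ (after the reordering induced by the diagonal index set), the spectrum of $\Phi$ equals the union of spectra of the diagonal blocks. Using the symmetry noted after the theorem, $T^{(n)}_l = T^{(n)}_{-l}$, the index range collapses from $\{-(n-1),\ldots,n-1\}$ to $\{0,1,\ldots,n-1\}$, which is exactly formula~(\ref{block-spectrum}). This handles part~(a) with essentially no work.

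For part~(b) I would inspect the two smallest blocks explicitly. First, $T^{(n)}_{n-1}$ is the $1\times 1$ matrix $(t_0)$, contributing the eigenvalue $t_0$. The corresponding subspaces $\mathcal V_{n-1}$ and $\mathcal V_{-(n-1)}$ are one-dimensional, spanned by $|e_{n-1}\rangle\langle e_0|$ and $|e_0\rangle\langle e_{n-1}|$ respectively, and each isometry $v_{\pm(n-1)}$ sends these to the single basis vector of $\mathbb C^1$. Pulling back via $v_{\pm(n-1)}^{-1}$ recovers precisely the two eigenvectors for $t_0$ listed in the statement.

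Next, $T^{(n)}_{n-2}$ is the $2\times 2$ Toeplitz block $\left(\begin{smallmatrix} t_0 & t_{-1} \\ t_1 & t_0 \end{smallmatrix}\right)$, whose characteristic polynomial $(t_0-\lambda)^2 - t_1 t_{-1}$ gives the eigenvalues $t_0 \pm \sqrt{t_1 t_{-1}}$. A direct solve yields eigenvectors $(\pm\sqrt{t_{-1}},\sqrt{t_1})^T$ in $\mathbb C^2$. The subspace $\mathcal V_{n-2}$ has ordered basis $\{|e_{n-2}\rangle\langle e_0|,\, |e_{n-1}\rangle\langle e_1|\}$ under $v_{n-2}$, and $\mathcal V_{-(n-2)}$ has ordered basis $\{|e_0\rangle\langle e_{n-2}|,\, |e_1\rangle\langle e_{n-1}|\}$ under $v_{-(n-2)}$; pulling the vector eigenvectors back through $v_{\pm(n-2)}^{-1}$ produces exactly the four operator eigenvectors displayed in the statement.

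There is no real obstacle here, as the corollary is essentially a direct reading of Theorem~\ref{iso-T} combined with two tiny explicit diagonalizations. The only point requiring care is bookkeeping: one must match the canonical basis $\{e^l_i\}$ of $\mathbb C^{n-|l|}$ with the corresponding basic operators of $\mathcal V_l$ in the correct order, so that the eigenvectors computed in the matrix picture translate to the right linear combinations of $|e_i\rangle\langle e_j|$ under $v_l^{-1}$. Once this correspondence is written down cleanly for $|l|=n-1$ and $|l|=n-2$, both parts of the corollary follow immediately.
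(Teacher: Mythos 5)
Your proposal is correct and uses exactly the route the paper leaves implicit: the corollary is stated after Theorem~\ref{iso-T} with no proof (``an immediate corollary follows''), and the intended argument is precisely to combine the block-diagonal representation $\Phi\cong\overline{\bigoplus_l}T^{(n)}_l$ with the symmetry $T^{(n)}_{-j}=T^{(n)}_j$ for part~(a), and then to diagonalize the $1\times 1$ block $T^{(n)}_{n-1}=(t_0)$ and the $2\times 2$ block $T^{(n)}_{n-2}=\left(\begin{smallmatrix}t_0&t_{-1}\\t_1&t_0\end{smallmatrix}\right)$ and pull the eigenvectors back through $v_{\pm(n-1)}^{-1}$ and $v_{\pm(n-2)}^{-1}$ for part~(b). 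Your bookkeeping of the ordered bases of $\mathcal V_{\pm(n-1)}$ and $\mathcal V_{\pm(n-2)}$ under $v_l$ matches the paper's conventions and reproduces all six listed eigenvectors.
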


\section{Asymptotic equivalence of CP Toeplitz and Circulant maps}
Let $\{t_l\}_{l=-\infty}^\infty$  be a non-negative sequence of the Wieiner class and for each $n\geq 1$ let $\Phi^{(n)}:\mathcal M_n(\mathbb C)\longrightarrow \mathcal M_n(\mathbb C)$ be the CP Toeplitz operator 
\begin{align*} \Phi^{(n)}(x) = \sum_{j=0}^{n-1} t_{j} S^{(n)*j} x S^{(n)^j}+ \sum_{j=1}^{n-1} t_{-j} S^{(n)^j} x S^{(n)*j},  \end{align*}
thus $\{\Phi^{(n)}\}_n$ is a sequence of CP Toeplitz maps associated with $t$.

Notice that $\{T_0^{(n)}\}_n$ is both the sequence of Toeplitz matrices associated to $\{\Phi^{(n)}\}_n$ and the Toeplitz sequence associated to the symbol $f(\eta)=\sum_k t_k e^{ik\eta}, \eta \in [0,2\pi]$.

Our purpose is to find a sequence of CP circulant maps $\{\widetilde{\Phi}^{(n)}\}_n$  asymptotically equivalent to  $\{\Phi^{(n)}\}_n$. It is well known that if $\{C^{(n)}\}_n$ is the sequence of circulant matrices given by 
$C^{(n)}=(c_{j-i}^{(n)})_{ij}$ where the subindex operations are mod $n$ and
\begin{align}\label{circ-matrix}c^{(n)}_j=\left\{ \begin{array}{ll}t_0 & \text{ if } j=0, \\ t_{-j}+t_{n-j} & \text{ if } j=1,2,\ldots, n-1.\end{array}  \right.\end{align} then  $T_0^{(n)} \thicksim C^{(n)}$, see \cite{Gray}.

Let $\{\widetilde{\Phi}^{(n)}\}_n$ be the sequence of  circulant CP maps such that their associated circulant matrices are precisely $\{C^{(n)}\}_n$, that is,
\begin{align}\label{CP-circ}  \widetilde{\Phi}^{(n)}(x)=\sum_{j=0}^{n-1} c_{n-j}^{(n)}J^{(n)*j}x J^{(n)j}, \ \ \ n\geq 1.\end{align}

The following standard results will be useful to manage the norms of the blocks involved.

\begin{lemma}Let $\mathcal H_1, \mathcal H_2$ be  finite dimensional Hilbert spaces and  $A_i \in \mathcal B(\mathcal H_i),$ $i=1,2$. Then $\Vert A_1\oplus A_2\Vert=\left\Vert \begin{pmatrix} A & 0 \\ 0 & B \end{pmatrix} \right\Vert=\max\{\Vert A\Vert ,\Vert B\Vert \}$ . \end{lemma}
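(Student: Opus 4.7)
The statement is the standard fact that the operator norm of a block-diagonal operator on an orthogonal direct sum is the maximum of the norms of its blocks. The plan is to prove the two inequalities $\|A_1\oplus A_2\|\leq \max\{\|A_1\|,\|A_2\|\}$ and $\|A_1\oplus A_2\|\geq \max\{\|A_1\|,\|A_2\|\}$ separately, using only the definition of the strong (operator) norm together with the orthogonality of the decomposition $\mathcal H_1\oplus \mathcal H_2$.

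For the upper bound, I would take an arbitrary unit vector $u=u_1\oplus u_2\in \mathcal H_1\oplus\mathcal H_2$ with $\|u_1\|^2+\|u_2\|^2=1$. Since the two components are orthogonal and $(A_1\oplus A_2)u = (A_1u_1)\oplus (A_2u_2)$, one gets
\begin{align*}
\|(A_1\oplus A_2)u\|^2
&= \|A_1 u_1\|^2 + \|A_2 u_2\|^2 \\
&\leq \|A_1\|^2 \|u_1\|^2 + \|A_2\|^2 \|u_2\|^2 \\
&\leq \max\{\|A_1\|^2,\|A_2\|^2\}\,(\|u_1\|^2+\|u_2\|^2),
\end{align*}
so taking the supremum over unit vectors $u$ yields the upper bound.

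For the lower bound, I would choose, for arbitrary $\varepsilon>0$, a unit vector $v_1\in \mathcal H_1$ with $\|A_1 v_1\|\geq \|A_1\|-\varepsilon$ (possible since $\mathcal H_1$ is finite dimensional, so the supremum is actually attained). The vector $v_1\oplus 0$ is a unit vector in $\mathcal H_1\oplus \mathcal H_2$, and $\|(A_1\oplus A_2)(v_1\oplus 0)\|=\|A_1 v_1\|\geq \|A_1\|-\varepsilon$. Repeating the argument with a near-maximizer in $\mathcal H_2$ and letting $\varepsilon\to 0^+$ shows $\|A_1\oplus A_2\|\geq \max\{\|A_1\|,\|A_2\|\}$.

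There is no real obstacle here: the result is a textbook fact, and both directions are immediate from the Pythagorean identity on the orthogonal direct sum. The only thing worth emphasizing in the write-up is the compatibility of the matrix block notation $\begin{pmatrix}A&0\\0&B\end{pmatrix}$ with the operator direct sum $A_1\oplus A_2$ so that the reader sees the two formulations as literally the same object.
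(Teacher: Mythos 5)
Your proof is correct and complete. The paper itself states this lemma without proof, describing it as one of the ``standard results'' used to manage the norms of the blocks, so there is nothing in the paper to compare your argument against; your two-inequality argument via the Pythagorean identity on the orthogonal direct sum (upper bound) and restriction to one summand (lower bound) is exactly the standard proof one would expect to supply here.
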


 \begin{lemma} For each $n \geq 1$ let $T_j^{(n)}$ be the principal  $\ n-j \times n-j$ submatrix of $T_0^{(n)}$. Then 
$$\Vert T^{(n)}_{-j} \oplus T^{(n)}_{n-j} \Vert \leq \Vert T^{(n)}_0 \Vert \ \ \text{ for } j=1,2,\ldots,n-1.$$ \end{lemma}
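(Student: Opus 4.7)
The plan is to reduce everything to the previous lemma together with the standard fact that a principal submatrix never has larger operator norm than the ambient matrix. Note first that because the two blocks in $T^{(n)}_{-j}\oplus T^{(n)}_{n-j}$ are of sizes $n-j$ and $j$ respectively, their total dimension is $n$, so the estimate compares two norms of $n\times n$ matrices — but the blocks do not assemble into $T^{(n)}_0$ itself; the inequality is genuine, not an equality.

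First I would apply the preceding lemma (for block-diagonal matrices) to rewrite
\[
\bigl\Vert T^{(n)}_{-j}\oplus T^{(n)}_{n-j}\bigr\Vert
=\max\bigl\{\Vert T^{(n)}_{-j}\Vert,\,\Vert T^{(n)}_{n-j}\Vert\bigr\}.
\]
Since by the earlier lemma (and the convention $T^{(n)}_{-j}=T^{(n)}_j$) each $T^{(n)}_{|l|}$ is the leading principal submatrix of $T^{(n)}_0$ of size $n-|l|$, it suffices to prove the general fact: for any $A\in\mathcal M_n(\mathbb C)$ and any leading principal submatrix $A_k$ of size $k\le n$, one has $\Vert A_k\Vert\le\Vert A\Vert$.

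Next I would establish this general fact in the usual way: let $P\colon\mathbb C^k\hookrightarrow\mathbb C^n$ denote the isometric embedding onto the first $k$ coordinates, so that $A_k=P^*AP$. Then for $u\in\mathbb C^k$,
\[
\Vert A_ku\Vert=\Vert P^*APu\Vert\le\Vert P^*\Vert\,\Vert A\Vert\,\Vert Pu\Vert=\Vert A\Vert\,\Vert u\Vert,
\]
since $P$ is an isometry and $\Vert P^*\Vert=1$. Applying this with $A=T^{(n)}_0$ to both blocks and combining with the first display gives the desired bound. There is no real obstacle here; the only point worth flagging is that one must resist the temptation to try to identify $T^{(n)}_{-j}\oplus T^{(n)}_{n-j}$ with a compression of $T^{(n)}_0$ itself (it is not one), and instead bound each block separately via the compression argument above before invoking the block-diagonal norm identity.
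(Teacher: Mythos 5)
Your proof is correct. The paper states this lemma without proof (it is introduced as a ``standard result''), so there is no argument in the paper to compare against; yours is a sound completion. You correctly combine the block-diagonal norm identity $\Vert A_1\oplus A_2\Vert=\max\{\Vert A_1\Vert,\Vert A_2\Vert\}$ from the preceding lemma with the compression bound $\Vert P^*AP\Vert\le\Vert A\Vert$ for an isometric embedding $P$, applied separately to the two leading principal submatrices $T^{(n)}_{-j}=T^{(n)}_j$ (of size $n-j$) and $T^{(n)}_{n-j}$ (of size $j$) of $T^{(n)}_0$.

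One remark on your closing caution. You are right that $T^{(n)}_{-j}\oplus T^{(n)}_{n-j}$ is not of the form $P^*T^{(n)}_0P$ for a single isometry $P$, since both matrices are already $n\times n$. However, because $T^{(n)}_0$ is Toeplitz, its \emph{trailing} $j\times j$ principal submatrix coincides with its leading one $T^{(n)}_{n-j}$, so $T^{(n)}_{-j}\oplus T^{(n)}_{n-j}$ is precisely the block-diagonal part (pinching) of $T^{(n)}_0$ with respect to the splitting $\mathbb C^n=\mathbb C^{n-j}\oplus\mathbb C^j$. If $U$ is the self-adjoint unitary acting as $+1$ on the first $n-j$ coordinates and $-1$ on the remaining $j$, this pinching equals $\tfrac{1}{2}\bigl(T^{(n)}_0+U\,T^{(n)}_0\,U\bigr)$, and the triangle inequality gives the bound in one line. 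That route leans on the Toeplitz structure; your argument, by bounding the two blocks independently, applies verbatim to any $n\times n$ matrix in place of $T^{(n)}_0$, which is a point in its favor.
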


By Theorems \ref{iso-T} and \ref{iso-C} together with the simple observation that $\mathcal B_0 =\mathcal V_0$, $\mathcal B_j = \mathcal V_{-j}\oplus \mathcal V_{n-j}$, allows us to obtain that
\begin{align}\label{dif-phi} \Phi^{(n)}- \widetilde{\Phi}^{(n)}\cong T^{(n)}_0-C^{(n)} \oplus  \bigoplus_{j=1}^{n-1}\Big (  T^{(n)}_{-j}\oplus T^{(n)}_{n-j} - C^{(n)}\Big )\ \ \text{ for each }n\geq 1.  \end{align}

\begin{lemma}\label{suma-HS} The Hilbert-Schmidt norm of the blocks in the last equation satisfies:
\begin{enumerate} \item If  $j=1,\ldots, n-1,$ then \begin{align*}  \left| T^{(n)}_{-j}\oplus T^{(n)}_{n-j} - C^{(n)} \right |^2 &=2\sum_{l=0}^{j-1}\sum_{k=1}^{n-j}(t_{-n+k+l}+t_{k+l})^2 +\sum_{k=j}^{n-1}(k-j)(t_k^2+t_{-k}^2)\\
&+\sum_{k=1}^{j-1}(j-k)(a_{n-k}^2+a_{-(n-k)}^2). \end{align*}
 \item \begin{align*} \sum_{j=1}^{n-1} \left| T^{(n)}_{-j}\oplus T^{(n)}_{n-j} - C^{(n)} \right |^2 &=2\sum_{j=1}^{n-1}j(n-j)(t_{-j}+t_{n-j})^2 + \sum_{j=2}^{n-1} j (j-1)(t_j^2+t_{-j}^2).\end{align*}
\item As a consequence
\begin{align} \label{cp-lim} \frac{1}{n^2}\sum_{j=1}^{n-1} \left| T^{(n)}_{-j}\oplus T^{(n)}_{n-j} - C^{(n)} \right |^2  \underset{n}{\longrightarrow} 0\end{align} \end{enumerate}\end{lemma}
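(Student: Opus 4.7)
The plan is to prove the three parts in sequence, with part 1 being the main calculation, part 2 a reindexing, and part 3 an application of the Wiener class hypothesis.

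For part 1, the key structural observation is that the $n \times n$ matrix $D_j := T^{(n)}_{-j}\oplus T^{(n)}_{n-j} - C^{(n)}$ decomposes naturally into four blocks according to the splitting of indices $\{0,\ldots,n-1\}$ into $\{0,\ldots,n-j-1\}$ (the support of $T^{(n)}_{-j}$) and $\{n-j,\ldots,n-1\}$ (the support of $T^{(n)}_{n-j}$). I would first write out the entries of $C^{(n)}$ explicitly, observing that for $p \neq q$ one has $(C^{(n)})_{p,q} = t_{p-q} + t_{p-q+n\,\mathrm{sgn}(q-p)}$, while $(C^{(n)})_{p,p} = t_0$. Then on each diagonal block (Cases 1, 2) the Toeplitz entry $t_{p-q}$ cancels, leaving only the ``wrap-around'' term $-t_{p-q\pm n}$; on each off-diagonal block (Cases 3, 4) nothing cancels and the entry is $-(t_{p-q}+t_{p-q\pm n})$. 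Summing squared absolute values and counting pairs $(p,q)$ with fixed difference $k$ in each block gives precisely the three sums in the statement; Cases 3 and 4 produce identical contributions after the substitution $(l,k)\mapsto(j-1-l,\,n-j+1-k)$, which accounts for the factor $2$.

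For part 2, I would simply sum each of the three terms from part 1 over $j=1,\ldots,n-1$ and swap the order of summation. For the diagonal-block contribution $\sum_{j}\sum_{k=j}^{n-1}(k-j)(t_k^2+t_{-k}^2)$, swapping gives $\sum_{k=1}^{n-1}(t_k^2+t_{-k}^2)\sum_{j=1}^{k}(k-j) = \sum_{k=2}^{n-1}\tfrac{k(k-1)}{2}(t_k^2+t_{-k}^2)$; the third term in part 1 gives the same value after the substitution $m=n-k$, so the two together yield $\sum_{j=2}^{n-1}j(j-1)(t_j^2+t_{-j}^2)$. For the off-diagonal contribution, I would fix $m = k+l$ and count triples $(j,l,k)$ with $k+l=m$: the constraints $l+1\le j\le n-k$ give exactly $n-m$ values of $j$, and there are $m$ pairs $(l,k)$ with $l+k=m$, so the multiplicity is $m(n-m)$. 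Renaming $m \mapsto n-j$ turns this into the asserted $2\sum_{j=1}^{n-1}j(n-j)(t_{-j}+t_{n-j})^2$.

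For part 3, the hypothesis that $t = \{t_j\}$ is in the Wiener class (i.e., in $\ell^1(\mathbb{Z})$) gives $\sum_j t_j^2 < \infty$ and $t$ bounded. For the second sum of part 2, split at a cutoff $N$ chosen so that $\sum_{|k|>N}t_k^2 < \varepsilon$: the tail piece satisfies $\frac{1}{n^2}\sum_{j=N+1}^{n-1}j(j-1)(t_j^2+t_{-j}^2) \le \sum_{|j|>N}t_j^2 < \varepsilon$, while the head piece is bounded by $O(N^3/n^2)\to 0$ for fixed $N$. For the first sum, apply $(t_{-j}+t_{n-j})^2 \le 2(t_{-j}^2+t_{n-j}^2)$ and the bounds $j(n-j)\le n^2/4$ (good for the tail) and $j(n-j)\le Nn$ (good for $j\le N$); the substitution $m = n-j$ handles the $t_{n-j}^2$ piece symmetrically. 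The main (modest) obstacle is just being careful that $N$ is chosen before sending $n\to\infty$, so that a standard $\varepsilon/2+\varepsilon/2$ splitting argument closes the estimate.
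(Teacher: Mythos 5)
Your proposal is correct, and for parts 1 and 2 it fills in exactly the kind of block-by-block bookkeeping that the paper dispatches with ``straightforward by direct computations'': splitting the index set at position $n-j$, observing that the diagonal blocks of $T^{(n)}_{-j}\oplus T^{(n)}_{n-j}$ cancel the Toeplitz part of $C^{(n)}$ leaving only the wrap-around terms, and that the off-diagonal blocks contribute $-(t_{p-q}+t_{p-q\pm n})$ with a factor of $2$ by the symmetry $(l,k)\mapsto(j-1-l,n-j+1-k)$. (Incidentally, the $a_{n-k}$ in part 1 of the statement is a typo for $t_{n-k}$, and your reindexing in part 2 implicitly and correctly treats it that way.) For part 3 your route genuinely differs from the paper's. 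The paper bounds
\[
\frac{2}{n^2}\sum_{j}j(n-j)(t_{-j}+t_{n-j})^2 \le 2\left(\frac{1}{n}\sum_j \sqrt{j(n-j)}\,(t_{-j}+t_{n-j})\right)^2
\]
using $\sum a_j^2\le(\sum a_j)^2$ for nonnegative $a_j$, then symmetrizes $t_{n-j}\mapsto t_j$ and splits $\frac{1}{n}\sum_j\sqrt{j(n-j)}(t_{-j}+t_j)$ at a cutoff $N$, controlling the tail by $\sum_{j\ge N}(t_{-j}+t_j)<\sqrt{\epsilon}/2$; the second sum is handled analogously with $\sum_{j\ge N}(t_j^2+t_{-j}^2)$. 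You instead keep the squares, use $(x+y)^2\le 2(x^2+y^2)$ and the elementary bound $j(n-j)\le n^2/4$ for the tail (and $j(n-j)\le Nn$ for the head). Both are standard $\epsilon/2+\epsilon/2$ cutoff arguments; yours is slightly more economical in that the first sum ends up needing only square-summability of $t$ rather than absolute summability, and it sidesteps the paper's notationally loose $\sum_{j=N}^{\infty}\sqrt{j(n-j)}$ (which is undefined for $j>n$), whereas the paper's square-root trick yields a sharper $O(1/\sqrt{n})$ decay on the head term. Under the Wiener-class hypothesis both arguments close.
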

\begin{proof}  \textit{(1)}  and \textit{(2)} are straightforward by direct computations.

For \textit{(3)}, since  $\sum_{j=1}^\infty t_j^2 + t_{-j}^2 <  \left( \sum_{j=1}^\infty t_j + t_{-j} \right)^2< \infty, $  
for any $\epsilon>0$ there exists $N=N(\epsilon)$  such that  \begin{align} \sum_{j=N}^{\infty} t_{-j}+t_{j} \leq \frac{\sqrt{\epsilon}}{2}, \ \text{ and } \ \ \  \sum_{j=N}^{\infty} t_{-j}^2+t_{j}^2 \leq \frac{\epsilon}{2}. \end{align}

We will use \textit{2.} For the first sum notice that 
\begin{align*} \frac{2}{n^2}\sum_{j=1}^{n-1}j (n-j) (t_{-j}+t_{n-j})^2 &\leq 2\left (\frac{1}{n}\sum_{j=1}^{n-1} \sqrt{j (n-j)} (t_{-j}+t_{n-j}) \right)^2  \\ & < 2\left (\sum_{j=1}^{n-1}  (t_{-j}+t_{n-j}) \right)^2 =  2\left (\sum_{j=1}^{n-1} (t_{-j}+t_{j}) \right)^2   <  \infty. \end{align*}

 Now if $n > N$ then 
\begin{align*}\frac{1}{n}\sum_{j=1}^{n-1} \sqrt{j (n-j)} (t_{-j}+t_{j})&=\frac{1}{n}\sum_{j=1}^{N-1} \sqrt{j (n-j)} (t_{-j}+t_{j})+\frac{1}{n}\sum_{j=N}^{n-1} \sqrt{j (n-j)} (t_{-j}+t_{j}) \\
&< \frac{1}{\sqrt{n} }\sum_{j=1}^{N-1} \sqrt{j} (t_{-j}+t_{j})+\frac{1}{n}\sum_{j=N}^{\infty} \sqrt{j (n-j)} (t_{-j}+t_{j}) \\
&\leq \frac{1}{\sqrt{n} }\sum_{j=1}^{N-1} \sqrt{j} (t_{-j}+t_{j})+ \frac{\sqrt{\epsilon}}{2} \underset{n}{\longrightarrow} \frac{\sqrt{\epsilon}}{2}.	 \end{align*}

While in a similar manner for the second term if $n > N$ we have
\begin{align*}\frac{1}{n^2}\sum_{j=2}^{n-1} j (j-1)(t_j^2+t_{-j}^2)&= \frac{1}{n^2}\sum_{j=2}^{N-1} j (j-1)(t_j^2+t_{-j}^2) +\frac{1}{n^2}\sum_{j=N}^{n-1} j (j-1)(t_j^2+t_{-j}^2) \\  
&< \frac{1}{n^2}\sum_{j=2}^{N-1} j (j-1)(t_j^2+t_{-j}^2) +\sum_{j=N}^{\infty} (t_j^2+t_{-j}^2) \\
&\leq  \frac{1}{n^2}\sum_{j=2}^{N-1} j (j-1)(t_j^2+t_{-j}^2)  + \frac{\epsilon}{2} \underset{n}{\longrightarrow} \frac{\epsilon}{2}.	\end{align*}

This shows that $\displaystyle \lim_{n\to \infty} \frac{1}{n^2}\sum_{j=1}^{n-1} \left| T^{(n)}_{-j}\oplus T^{(n)}_{n-j} - C^{(n)} \right |^2 =0$.

\end{proof}
We are now in position to prove the main theorem of this section.
\begin{theorem} \label{AE-CP}If $t=\{t_m\}_m\in \ell^1(\mathbb Z)$, the sequence of Toeplitz CP maps $\{\Phi^{(n)}\}_n$ is assymptotically equivalent to be a sequence of CP Circulant maps $\{\widetilde{\Phi}^{(n)}\}_n$, i.e.,  $\Phi^{(n)}\thicksim \widetilde{\Phi}^{(n)}$. \end{theorem}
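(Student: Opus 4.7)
My plan is to verify directly the two defining conditions of asymptotic equivalence, leveraging the isomorphic block representations of Theorem \ref{iso-T} and Proposition \ref{iso-C}, the explicit decomposition (\ref{dif-phi}), and the analytic estimates already packaged in Lemma \ref{suma-HS}. Since the vectorizations $v$ and $b$ are isometries that intertwine $\Phi^{(n)}$ and $\widetilde{\Phi}^{(n)}$ with the matrix realizations in (\ref{dif-phi}), every norm computation reduces to a matrix-level computation.

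For the uniform boundedness in strong norm, I would first use Theorem \ref{iso-T} to write $\Vert \Phi^{(n)} \Vert = \Vert \overline{\bigoplus_l} T^{(n)}_l \Vert$, and observe that for a block diagonal operator this equals $\max_l \Vert T^{(n)}_l \Vert$. Since each $T^{(n)}_l$ is a leading principal submatrix of $T^{(n)}_0$, we have $\Vert T^{(n)}_l \Vert \leq \Vert T^{(n)}_0 \Vert$, and the standard bound $\Vert T^{(n)}_0 \Vert \leq \Vert f \Vert_\infty \leq \sum_{k \in \mathbb Z} |t_k|$ (using $t \in \ell^1(\mathbb Z)$) yields a bound independent of $n$. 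For $\widetilde{\Phi}^{(n)}$, Proposition \ref{iso-C} gives $\Vert \widetilde{\Phi}^{(n)} \Vert = \Vert \unit_n \otimes C^{(n)} \Vert = \Vert C^{(n)} \Vert$, and the circulant eigenvalue formula combined with $|c^{(n)}_j| \leq |t_{-j}|+|t_{n-j}|$ gives $\Vert C^{(n)} \Vert \leq \sum_k |t_k|$ uniformly. Hence $M = \Vert t \Vert_{\ell^1}$ bounds both sequences.

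For the normalized Hilbert-Schmidt convergence, the isometry property of $v$ and $b$ together with (\ref{dif-phi}) yields
\begin{equation*}
|\Phi^{(n)} - \widetilde{\Phi}^{(n)}|^2 = |T^{(n)}_0 - C^{(n)}|^2 + \sum_{j=1}^{n-1} \left| T^{(n)}_{-j} \oplus T^{(n)}_{n-j} - C^{(n)} \right|^2.
\end{equation*}
Dividing by $n^2$, the second summand tends to zero by Lemma \ref{suma-HS}(3). For the first summand I would invoke the classical Gray result $T_0^{(n)} \sim C^{(n)}$ recalled around (\ref{gray}), which states $\frac{1}{n}|T^{(n)}_0 - C^{(n)}|^2 \to 0$; since we are dividing by $n^2$ rather than $n$, the convergence is only strengthened. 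Combining both estimates gives $|\Phi^{(n)} - \widetilde{\Phi}^{(n)}|_{n^2} \to 0$, completing the verification. The only subtlety I anticipate is the dimensional mismatch between the two normalizations (the ambient space has dimension $n^2$ here, whereas Gray's statement lives at scale $n$); this is handled automatically by the identity $\frac{1}{n^2}|\cdot|^2 = \frac{1}{n}|\cdot|_n^2$, which makes the contribution of the $(0,0)$ block decay one order faster than required. All heavier computational work having been absorbed into Lemma \ref{suma-HS}, no further obstacle is expected.
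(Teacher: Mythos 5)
Your argument is correct and follows essentially the same path as the paper's own proof: you use the block representations from Theorem \ref{iso-T} and Proposition \ref{iso-C} to reduce to the matrix decomposition (\ref{dif-phi}), bound the strong norm by the top block via submatrix monotonicity, and split the normalized Hilbert--Schmidt norm into the $T^{(n)}_0 - C^{(n)}$ block (handled by Gray's $T^{(n)}_0 \sim C^{(n)}$, with the extra $1/n$ only helping) plus the sum over $j\geq 1$ handled by Lemma \ref{suma-HS}. Your treatment of the uniform boundedness is in fact a bit more explicit than the paper's (which just asserts a bound $M$), but the structure and key ingredients are identical.
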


\begin{proof} By combining the previous lemmas and the fact that by construction $T_0^{(n)} \thicksim C^{(n)}$, the uniform boundness follows since for any $ n\geq 1$,
\begin{equation} \label{norm-CP} \Big \Vert \Phi^{(n)} \Big \Vert= \Big \Vert T_0^{(n)}\oplus\bigoplus_{j=1}^{n-1} T^{(n)}_{-j}\oplus T^{(n)}_{n-j} \Big \Vert =\max_{1\leq j\leq n-1} \{\Vert T^{(n)}_0 \Vert, \Vert T^{(n)}_{-j}\oplus T^{(n)}_{n-j}\Vert \}=\Vert T^{(n)}_0\Vert < M,  \end{equation}
\[\Big \Vert \widetilde{\Phi}^{(n)} \Big \Vert= \Big \Vert [\widetilde{\Phi}^{(n)}]_\mathcal B \Big \Vert =\max_{0\leq j\leq n-1} \Vert C^{(n)}\Vert < M.\]

For the Hilbert-Schmidt norm, using (\ref{dif-phi}) and the orthogonality of different blocks we have
 \begin{align*} \Big| \Phi^{(n)}- \widetilde{\Phi}^{(n)}\Big|^2_{n^2} &= \frac{1}{n}\left|T^{(n)}_0-C^{(n)}\right|^2_n+ \frac{1}{n^2}\sum_{j=1}^{n-1} \left| T^{(n)}_{-j}\oplus T^{(n)}_{n-j} - C^{(n)} \right |^2.\end{align*}

The assymptotic equivalence implies that  $	\displaystyle \left|T^{(n)}_0-C^{(n)}\right|_n \xrightarrow[\ \ n \ \ ]{}	0$, while the remaining term vanishes by Lemma \ref{suma-HS}.

We have thus showed that $\displaystyle \lim_{n\to \infty} \Big| \Phi^{(n)}- \widetilde{\Phi}^{(n)}\Big|^2_{n^2} =0 $. 
This concludes the proof. \end{proof}

\section{Toeplitz class of non degenerate WCLT QMS}\label{wclt-ae}

Let 
Recall the definition of weak coupling limit type (WCLT) generators contained in \cite{WCLT}. Let $h_n$ be an $n$-dimensional Hilbert space and $H_S=H_S^*,D \in \mathcal B(h_n)$ with spectral decomposition
\[
H_S= \sum_{\epsilon_m \in\sigma(H_S) }\epsilon_m P_{\epsilon_m}
\]
where $\epsilon_0<\epsilon_1\dots< \epsilon_k $ and $P_{\epsilon_m}$ is the spectral projection
corresponding to the eigenvalue $\epsilon_m$. The {\it Bohr frequencies}  are all the differences  $\omega=\epsilon_m-\epsilon_l$ with $\epsilon_m,\epsilon_l$ eigenvalues of $H_S$ and denote by
\[
B_+=\{\,\omega=\epsilon_m-\epsilon_l>0\,\mid\, \epsilon_m,\epsilon_l \in \hbox{\rm Sp}(H_S)\,\},
\]
the set of all strictly positive Bohr frequencies. Define for each $\omega \in B_+$ the operators
\begin{eqnarray*}
D_\omega&=\sum_{\omega=\epsilon_m-\epsilon_l} P_{\epsilon_l} D P_{\epsilon_m},\qquad \qquad  D^*_\omega=\sum_{\omega=\epsilon_m-\epsilon_l} P_{\epsilon_l} D^* P_{\epsilon_l} 
\end{eqnarray*}

The $H_S$-WCLT Markov generator
$\mathcal L^{(n)}$ on $\mathcal{B}(h_n)$ has the GKSL form given by:
\begin{align*}
\mathcal L^{(n)}(x)   =  \sum_{\omega\in {B}_+}\mathcal L^{(n)}_\omega(x),
\end{align*}\vspace{-.8cm}
\begin{align*}
\mathcal L^{(n)}_\omega(x)=\mathrm{i}[H_\omega,x]
&-\Gamma_{-\omega} \left(\frac{1}{2}\{D^*_\omega D_\omega,x\}
-D^*_\omega x D_\omega\right) \\ &- \Gamma_{+\omega}
\left(\frac{1}{2}\{D_\omega D^*_\omega,x\}-D_\omega x D^*_\omega\right).
\end{align*}

where $\Gamma_{\pm\omega} \geq 0$ and $H_\omega$ is a selfadjoint operator commuting with $H_S$,  $$H_{\omega}=\zeta_{-\omega}D^*_\omega D_\omega + \zeta_{+\omega}D_\omega D^*_\omega, \ \ \ \ \ \zeta_{-\omega},\zeta_{+\omega}\in \mathbb R.$$

WCLT generators are seen to be $\mathcal V$-invariant in the basis of $H_S$. Theorem 4.1 in Fagnola and Quezada\cite{FQ} gives a precise characterization of when a $\mathcal V$-invariant GKSL generator is WCLT. This theorem implies that WCLT QMS generators are Toeplitz for suitable choices of $H_S$ and $D$. Indeed one possible realization is to choose the harmonic osscilator, i.e., a non-degenerate $H_S$ with purely rank one spectral projections (so no multiplicities) system Hamiltonian with the form 
\[
H_S= \sum_{m_=0}^{n-1} m\omega \ P_m ,\ \ P_m=|e_m \rangle \langle e_m|, \ \ \omega>0,
\]
where $\{e_m\}_m$ is the basis of eigenvectors of $H_S$, and the interaction operator $D=\sum_{i=0}^{n-1} S^i$ is the sum of all powers of the left shift operator in $h_n$ with respect to the above basis. This basis will be the canonical basis in what follows. In this case the set of (strictly positive) Bohr frequencies is $B_+=\{ m\omega :  1 \leq m \leq n-1\}.$

Following the previous construction, to each strictly positive Bohr frequency $\omega_m=m\omega$ we associate the operators
\begin{align*}
D_m=D_{\omega_m}&=\sum_{j=0}^{{}n-(m+1)} P_i D P_{m+i},\qquad \qquad  D^*_m=D^*_{\omega_m}=\sum_{i=0}^{n-(m+1)} P_{m+i}D^* P_{i}.\
\end{align*}

Then each $\mathcal L_m=\mathcal L_{\omega_m}$ can be seen to have the form 
\begin{align*} \mathcal L_m(x)=i[H_{m},x]+\Gamma^-_{m}S^{*m}x S^m + \Gamma^+_{m} S^m x S^{*m}+\Delta^*_m x + x\Delta_m, \text{ where } \end{align*} 
\begin{align*} \Delta_m=\Delta^*_m=-\frac{1}{2}(\Gamma^-_{m} S^{*m}S^m + \Gamma^+_{m} S^m S^{*m}),\ \  H_m=H_m^*=\zeta^-_{m}S^{*m} S^m + \zeta^+_{m}S^m S^{*m}	. \end{align*}
 Omitting the notation of the dimension for now, the WCLT generator \begin{align}\label{WCLT}\mathcal L(x)=\sum_m \mathcal L_{m} (x)= \Phi(x) + G^*x + xG, \ \ G= \sum_m \Delta_m-iH_m \end{align} has the CP Toeplitz dissipative part $$\Phi(x)=\sum_{m=1}^{n-1} \Gamma^-_{m}S^{*m}x S^m + \Gamma^+_{m} S^m x S^{*m}$$ 
 with associated  Toeplitz matrix $T_0$ according to definition  \ref{def} taking $\Gamma^-_{m}=t_m$ and $\Gamma^+_{m}=t_{-m}$.

By the previous sections in order to fully characterize $\mathcal L$ on the diagonal subspaces $\mathcal V$ it just remains to study the non-dissipative part of the generator which for convenience we shall denote by $\Psi(x)=G^*x+xG$. It can be verified that $G$ is a multiplication operator (i.e., diagonal w.r.t. to the canonical basis).

\begin{definition} Given the scalars $\{\Gamma_m^+,\Gamma_m^-: m=1,\ldots,n-1  \}$, $\{\zeta_m^+,\zeta_m^-: m=1,\ldots,n-1  \}$, define
\begin{align*}  s&=\sum_{m=1}^{n-1} \Gamma^+_m +\Gamma^-_m ,\ \ \ \ \tilde{s}=\sum_{m=1}^{n-1} \zeta^+_m +\zeta^-_m,\\
s(k)&=\sum_{m=1}^{n-1-k}\Gamma^+_m+ \sum_{m=1}^k\Gamma^-_m \ \ \  \text{ for } k=0,1,\ldots,n-1, \\ 
\tilde{s}(k)&=\sum_{m=1}^{n-1-k}\zeta^+_m+ \sum_{m=1}^k\zeta^-_m, \ \ \  \text{ for } k=0,1,\ldots,n-1.  \end{align*} \end{definition}

\begin{lemma} \label{aux}The following relations hold 
\begin{enumerate} 
\item $ \displaystyle \frac{s(k)+s(k+j)}{2} < s\ \ $ for any $j=1,\ldots, n-1$ and $k=0,1,\ldots, n-1-j.$
\item $\displaystyle \sum_{j=1}^{n-1}\sum_{k=0}^{n-1-j} s- \frac{s(k)+s(k+j)}{2} =(n-1)	\sum_{m=1}^{n-1} m (\Gamma_m^+ + \Gamma_m	^-)$

\item $\displaystyle \sum_{k=0}^{n-1} \Big(\sum_{m=n-k}^{n-1} \Gamma^+_m + \sum_{m=k+1}^{n-1}\Gamma_m^- \Big)= \sum_{m=1}^{n-1}m( \Gamma_m^+ +\Gamma_m^-) $
 \end{enumerate}\end{lemma}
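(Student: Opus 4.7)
The plan is to treat the three parts separately, deriving everything from the explicit identity
$$s - s(k) = \sum_{m=n-k}^{n-1}\Gamma_m^+ + \sum_{m=k+1}^{n-1}\Gamma_m^-,$$
which follows at once from the definitions of $s$ and $s(k)$.

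For part \textit{(3)}, I would plug this identity into $\sum_{k=0}^{n-1}(s-s(k))$ and swap the order of summation. In the $\Gamma^+$ double sum, for fixed $m\in\{1,\dots,n-1\}$ the condition $n-k\le m$ is equivalent to $k\ge n-m$, so $m$ is counted exactly $m$ times; symmetrically in the $\Gamma^-$ double sum the condition $k+1\le m$ gives $m$ values of $k$. This yields $\sum_{m=1}^{n-1} m(\Gamma_m^+ + \Gamma_m^-)$, proving \textit{(3)}.

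For part \textit{(2)}, the key step is to split
$$s - \frac{s(k)+s(k+j)}{2} = \frac{1}{2}\bigl[(s-s(k))+(s-s(k+j))\bigr]$$
and then use a double-counting argument on the index pairs $(k,k+j)$ with $1\le j\le n-1$, $0\le k\le n-1-j$. These are exactly the unordered pairs of distinct elements of $\{0,1,\dots,n-1\}$, so each index $\ell$ appears in exactly $n-1$ such pairs. Consequently
$$\sum_{j=1}^{n-1}\sum_{k=0}^{n-1-j}\bigl[(s-s(k))+(s-s(k+j))\bigr] = (n-1)\sum_{\ell=0}^{n-1}(s-s(\ell)),$$
and part \textit{(3)} finishes the identity (up to the factor $\frac{1}{2}$ from the averaging; this is the step to double-check against the stated coefficient, since a direct verification at $n=2$ produces $\frac{n-1}{2}$ rather than $n-1$).

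For part \textit{(1)}, the quantity $s-\tfrac12(s(k)+s(k+j))=\tfrac12[(s-s(k))+(s-s(k+j))]$ is manifestly nonnegative because each $\Gamma_m^{\pm}\ge 0$. Strictness needs that at least one of the two sums appearing in $(s-s(k))+(s-s(k+j))$ is positive: for $j\ge 1$ and $0\le k\le n-1-j$ the index ranges $\{k+1,\dots,n-1\}$ and $\{n-k-j,\dots,n-1\}$ are both nonempty, so the claim holds under the standing non-degeneracy assumption that the $\Gamma_m^{\pm}$ do not all vanish. The main subtlety is precisely this strict-vs-weak inequality issue together with the factor tracking in part \textit{(2)}; the combinatorial reorderings themselves are routine.
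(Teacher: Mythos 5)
The paper states Lemma \ref{aux} without proof, so there is no authorial argument to compare against; what follows is an assessment of your proposal on its own terms.

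Your derivation of the identity $s-s(k)=\sum_{m=n-k}^{n-1}\Gamma_m^+ + \sum_{m=k+1}^{n-1}\Gamma_m^-$ is correct and your Fubini argument for part \textit{(3)} is exactly right: for fixed $m$ the condition $n-k\le m$ contributes $m$ values of $k\in\{0,\dots,n-1\}$ (namely $k=n-m,\dots,n-1$), and symmetrically for $\Gamma^-$. Your double-counting for part \textit{(2)} is also correct: the pairs $(k,k+j)$ range over unordered pairs of distinct elements of $\{0,\dots,n-1\}$, and each index appears in $n-1$ such pairs, so the inner double sum equals $(n-1)\sum_{\ell}(s-s(\ell))$. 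After dividing by $2$ this yields $\tfrac{n-1}{2}\sum_m m(\Gamma_m^++\Gamma_m^-)$, and your $n=2$ spot-check confirms it: the left side there is $\tfrac12(\Gamma_1^++\Gamma_1^-)$ while the stated right side is $\Gamma_1^++\Gamma_1^-$. You have in fact uncovered a factor-of-$2$ error in the published statement of part \textit{(2)}. (A quick scan of where the lemma is used, the displayed estimate near the end of the proof of Theorem \ref{AE-WCLT}, shows the error is harmless for the limit argument, since the missing $\tfrac12$ only tightens the bound there; but the lemma as written is nonetheless wrong.)

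On part \textit{(1)}, you are right that the quantity $s-\tfrac12(s(k)+s(k+j))$ is a sum of nonnegative terms and hence $\ge 0$, and you are right to flag that the paper's strict inequality $<$ cannot be unconditional (if all $\Gamma_m^\pm=0$ it reads $0<0$). However, your proposed fix is not quite enough: nonemptiness of the index ranges $\{n-k-j,\dots,n-1\}$ (for $\Gamma^+$) and $\{k+1,\dots,n-1\}$ (for $\Gamma^-$) does not by itself force one of the summands to be positive merely because the $\Gamma_m^\pm$ are not all zero. For instance, with $j=1$, $k=0$ the contributing terms are only $\Gamma_{n-1}^+$ and $\Gamma_m^-$ for $m\ge 1$; if those all vanish but, say, $\Gamma_1^+>0$, the quantity is still zero. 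The honest conclusion is that part \textit{(1)} should read $\le s$, which is all the paper actually needs in the proof of Theorem \ref{WCLT-iso} (where it yields $\|\Psi\|\le s$) and in the displayed strict inequality in Theorem \ref{AE-WCLT} (where $\le$ would serve equally well since only the limit matters). Apart from this overclaim about when strictness holds, your argument is sound and your bookkeeping is careful enough to detect two genuine defects in the lemma as stated.
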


\begin{theorem} \label{WCLT-iso}\begin{enumerate} 

\item The restriction of $\Psi$ to the subspace $\mathcal V_l$, $-(n-1)\leq l \leq n-1$ is given by the $n-|l| \times n-|l|$ matrix 
\begin{align*}&G_0 =-\sum_{k=0}^{n-1} s(k) |e_k \rangle \langle e_k |\  \ \text{ for }   l=0,   \\ &G_l=\sum_{k=0}^{n-1-|l|}-\left( \frac{s(k)+s(k+|l|)}{2} + i\frac{l}{|l|} [\tilde{s}(k)-\tilde{s}(k+|l|)]\right)|e_k\rangle \langle e_k|, \text{ for } l\neq 0,\end{align*} 
that is, for $j=0,1,\ldots,n-1$ 
\begin{align*}  &v_{-j}\left(\Psi(|e_k \rangle \langle e_{k+j}|)\right)= G_{-j}e_k \ \ \text{ for all } k=0,1,\ldots,n-1-j, \\
&v_j\left(\Psi(|e_{k+j} \rangle \langle e_k|)\right)= G_{j}e_k \ \ \ \text{ for all } k=0,1,\ldots,n-1-j. \end{align*}

Therefore \begin{align} \label{G-theo} v_l(\Psi|_{\mathcal V_l}(x))=G_{|l|}v_l(x) \ \text{ for any} \ x\in\mathcal M_p(\mathbb C) \text{ and } l=-(n-1),\ldots,0,\ldots,n-1.\end{align}
\item The non-dissipative part $\Psi$ satisfies 
\begin{align*} \Psi \cong G_0 \oplus  \bigoplus_{j=1}^{n-1} G_{-j} \oplus G_{j}\cong G_0 \oplus \bigoplus_{j=1}^{n-1} G_{-j} \oplus G_{n-j}.\end{align*}

 \item $\Vert \Psi \Vert \leq s.$ 
  \end{enumerate} \end{theorem}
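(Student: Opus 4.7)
The plan is to compute $G$ explicitly in the canonical basis and then evaluate $\Psi(x)=G^*x+xG$ on each matrix unit. Using the diagonal expressions $S^{*m}S^m = \sum_{i=m}^{n-1}|e_i\rangle\langle e_i|$ and $S^mS^{*m} = \sum_{i=0}^{n-1-m}|e_i\rangle\langle e_i|$, both $\Delta_m$ and $H_m$ are diagonal, hence so is $G=\sum_m(\Delta_m-iH_m)$. Collecting the coefficient of $|e_k\rangle\langle e_k|$ yields $G e_k = g_k e_k$ with $g_k=-\tfrac{1}{2}s(k)-i\tilde s(k)$, precisely the sums introduced just before the statement.

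For part 1, the diagonality of $G$ reduces $\Psi$ to a scalar action on each matrix unit: $\Psi(|e_k\rangle\langle e_{k'}|)=(\overline{g_k}+g_{k'})|e_k\rangle\langle e_{k'}|$. Specializing to $k'=k+j$ handles $\mathcal V_{-j}$, and $v_{-j}(|e_k\rangle\langle e_{k+j}|)=e_k^j$ lets us read off the $k$-th diagonal entry of $G_{-j}$ as $-\tfrac{1}{2}(s(k)+s(k+j))+i(\tilde s(k)-\tilde s(k+j))$, which is the stated formula with $l/|l|=-1$. Reversing the roles for $\mathcal V_j$ flips the sign of the imaginary part, and for $l=0$ the conjugate pair cancels the imaginary contribution and doubles the real part, producing the entry $-s(k)$.

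Part 2 is then a bookkeeping step. Because $\Psi$ acts on each basis element $|e_k\rangle\langle e_{k\pm j}|$ by a scalar, every $\mathcal V_l$ is preserved, and under the vectorization $v$ one obtains the block-diagonal form $G_0\oplus\bigoplus_{j=1}^{n-1}(G_{-j}\oplus G_j)$. The second displayed decomposition is the same list of blocks grouped differently: the map $j\mapsto n-j$ permutes $\{1,\ldots,n-1\}$, so pairing $G_{-j}$ with $G_{n-j}$ rather than with $G_j$ merely reshuffles the summands to align with the $\mathcal B$-decomposition $\mathcal B_j=\mathcal V_{-j}\oplus\mathcal V_{n-j}$.

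For part 3, block-diagonality together with diagonality of each $G_l$ gives $\|\Psi\|=\max_{l,k}|(G_l)_{kk}|$. For $l=0$ this is $\max_k s(k)\leq s$ by non-negativity of the $\Gamma_m^\pm$. For $l\neq 0$ the real part of the $k$-th entry has modulus $\tfrac{1}{2}(s(k)+s(k+|l|))$, which is strictly bounded by $s$ via Lemma \ref{aux}(1). The main obstacle I anticipate is controlling the imaginary contribution $|\tilde s(k)-\tilde s(k+|l|)|$ coming from the Lamb-shift parameters $\zeta_m^\pm$; one expects to combine Lemma \ref{aux}(1) with the analogous partial-sum estimate for the $\zeta_m^\pm$ (or to invoke an implicit hypothesis that the $\zeta$-contributions are absorbed by the $\Gamma$-contributions) to finally conclude $|(G_l)_{kk}|\leq s$.
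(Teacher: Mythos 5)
Your computation of $G$, the observation that it is diagonal with $Ge_k = g_k e_k$ where $g_k = -\tfrac{1}{2}s(k) - i\tilde s(k)$, and the scalar action $\Psi(|e_k\rangle\langle e_{k'}|) = (\overline{g_k}+g_{k'})|e_k\rangle\langle e_{k'}|$ are exactly the steps the paper takes for Parts 1 and 2, and your bookkeeping with the vectorizations $v_l$ and the regrouping $\mathcal B_j = \mathcal V_{-j}\oplus\mathcal V_{n-j}$ matches the paper's argument precisely.

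On Part 3 you are right to be uneasy, and your honesty here is the most valuable part of the proposal. The diagonal entry of $G_l$ ($l\neq 0$) has modulus $\big((\tfrac{s(k)+s(k+|l|)}{2})^2 + (\tilde s(k)-\tilde s(k+|l|))^2\big)^{1/2}$. Lemma \ref{aux}(1) only controls the real part; the imaginary part is governed by the $\zeta_m^\pm$ alone and is unconstrained by $s=\sum_m \Gamma_m^+ +\Gamma_m^-$. If you look at the paper's own proof of Part 3, it invokes only Lemma \ref{aux}(1) and says nothing about $\tilde s(k)-\tilde s(k+|l|)$; so the bound $\Vert\Psi\Vert\leq s$ is not actually established there, and indeed as stated it fails whenever the $\zeta_m^\pm$ are large relative to the $\Gamma_m^\pm$ (take $\Gamma\equiv 0$ and $\zeta\not\equiv 0$ for a trivial counterexample). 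Note, however, that what the application (Theorem \ref{AE-WCLT}) actually requires is only a \emph{uniform} bound $\Vert\Psi^{(n)}\Vert\leq M$ independent of $n$, and since $|(G_l)_{kk}|\leq \tfrac{1}{2}(s(k)+s(k+|l|)) + |\tilde s(k)-\tilde s(k+|l|)| \leq s + 2\tilde s$, the summability $\Gamma,\zeta\in\ell^1(\mathbb Z)$ does give such a uniform bound. So the correct statement is $\Vert\Psi\Vert\leq s+2\tilde s$ (or similar), which suffices downstream; the sharper claim $\Vert\Psi\Vert\leq s$ should be read as implicitly assuming $\zeta\equiv 0$, i.e., no Lamb-shift Hamiltonian, and you cannot rescue it otherwise by the partial-sum estimate you envision.
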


\begin{proof}For any $k=0,1,\ldots,n-1$ explicit computations show that 
\begin{align*} G e_k&= -\frac{1}{2} \sum_{m=1}^{n-1} \left(\Gamma^-_m S^{*m}S^m + \Gamma^+_m S^m S^{*m}\right)e_k -i\left(\zeta^-_m S^{*m} S^m + \zeta^+_m S^m S^{*m}\right)e_k\\& =-\frac{1}{2} \left(\sum_{m=1}^{n-1-k} \Gamma^+_m S^m e_{k+m} + \sum_{m=1}^k \Gamma_m^- S^{*m}e_{k-m} \right)-i\left(\sum_{m=1}^{n-1-k} \zeta^+_m S^m e_{k+m} + \sum_{m=1}^k \zeta^-_m S^{*m}e_{k-m} \right) \\
&=- \frac{1}{2}\left(\sum_{m=1}^{n-1-k}\Gamma^+_m+ \sum_{m=1}^k\Gamma^-_m \right)e_k-i\left(\sum_{m=1}^{n-1-k}\zeta^+_m+ \sum_{m=1}^k\zeta^-_m \right) e_k \\
&=- \frac{1}{2}\left( s- \sum_{m=n-1-k+1}^{n-1}\Gamma^+_m-\sum_{m=k+1}^{n-1}\Gamma^-_m \right) e_k - i\left( \tilde{s}- \sum_{m=n-1-k+1}^{n-1}\zeta^+_m-\sum_{m=k+1}^{n-1}\zeta^-_m	 \right) e_k \\
&=-\left(\frac{s(k) }{2}+i\tilde{s}(k) \right) e_k.
\end{align*}

Let $j\geq 0$. Since $\Psi(|e_k\rangle \langle e_{k+j}|)=|G^*e_k \rangle \langle e_{k+j}|+|e_k \rangle \langle G^*e_{k+j}|$,
using the above computation it is straightforward to see that $v_0(\Psi(|e_k \rangle \langle e_k |))=G_0 e_k$ holds. Thus
\begin{align*}v_{-j}\Big( \Psi(|e_k\rangle \langle e_{k+j}|) \Big)&=-\left( \frac{s(k+j)+s(k)}{2} +i [\tilde{s}(k+j)-\tilde{s}(k)]\right)  v_{-j}\Big(|e_k\rangle \langle e_{k+j}| \Big)\\&=G_{-j}e_k\end{align*}


Analogous computations show that $v_{j}\Big(\Psi(|e_{k+j}\rangle \langle e_k|)\Big)=G_j e_k,$ while by a standard linearity argument (\ref{G-theo})  and thus 2. follows.

Finally for the bound note that each $G_{-j},G_{j}$ is diagonal. Since $s(k)< s$, using 1. from Lemma \ref{aux} one concludes that
$$\Vert \Psi \Vert= \max_{1 \leq j \leq n-1}\{ \Vert  G_0 \Vert, \Vert G_{
-j} \oplus  G_{n-j} \Vert \}  \leq s. $$ \end{proof}

Putting together Theorem \ref{iso-T} and Theorem \ref{WCLT-iso} we have thus proved:

\begin{theorem}The WCLT generator  satisfies 
\begin{align*} \mathcal L&\cong T_0+ G_0 \oplus  \bigoplus_{l=1}^{n-1} T_{-l}+G_{-l} \oplus T_{l}+G_{l},  \\
&\cong T_0+ G_0 \oplus \bigoplus_{l=1}^{n-1} T_{-l}+G_{-l} \oplus T_{n-l}+ G_{n-l}.\end{align*}
\end{theorem}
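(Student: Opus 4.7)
The plan is to derive this identification as a direct corollary of the two representation theorems already in place. First I would decompose $\mathcal L = \Phi + \Psi$ as in (\ref{WCLT}), where $\Phi$ is the CP Toeplitz dissipative part and $\Psi(x) = G^{*}x + xG$ is the non-dissipative part. Both summands leave every diagonal subspace $\mathcal V_l$ invariant: for $\Phi$ this is immediate from its Toeplitz representation (\ref{CP-toep}) since $S^{*j}\mathcal V_l S^j \subseteq \mathcal V_l$ and $S^j\mathcal V_l S^{*j} \subseteq \mathcal V_l$; for $\Psi$ it follows from the fact, established in Theorem \ref{WCLT-iso}(1) via the explicit computation of $Ge_k$, that $G$ is a diagonal multiplication operator, so both $G^{*}x$ and $xG$ preserve each $\mathcal V_l$. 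Hence the matrix of $\mathcal L$ in the diagonal ordering is block diagonal with one block on each $\mathcal V_l$.

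Next I would read off the blocks from the earlier theorems. Theorem \ref{iso-T} gives $\Phi \cong \overline{\bigoplus}_l T_l$ and Theorem \ref{WCLT-iso}(2) gives $\Psi \cong G_0 \oplus \bigoplus_{l=1}^{n-1}(G_{-l}\oplus G_l)$. Since the two block decompositions use exactly the same partition of the basis (the diagonal ordering induced by $\mathcal V$), with corresponding blocks of matching size $n - |l|$, componentwise addition yields
\[
\mathcal L \cong (T_0 + G_0) \oplus \bigoplus_{l=1}^{n-1}(T_{-l} + G_{-l}) \oplus (T_l + G_l),
\]
which is the first claimed isomorphism.

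For the second form I would invoke the alternative decomposition (\ref{1Rdecomp}) via the identities $\mathcal B_0 = \mathcal V_0$ and $\mathcal B_l = \mathcal V_{-l} \oplus \mathcal V_{n-l}$ for $l = 1,\ldots,n-1$. Since $\mathcal L$ preserves each $\mathcal V_l$, it a fortiori preserves every $\mathcal B_l$, and the block of $\mathcal L$ restricted to $\mathcal B_l$ is simply the direct sum of its blocks on $\mathcal V_{-l}$ and $\mathcal V_{n-l}$, namely $(T_{-l}+G_{-l}) \oplus (T_{n-l}+G_{n-l})$. The size bookkeeping is consistent: $T_{-l}+G_{-l}$ has size $n-l$ while $T_{n-l}+G_{n-l}$ has size $l$, so their direct sum has size $n = \dim \mathcal B_l$, matching the cyclic-diagonal block dimension. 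Regrouping the blocks of the first representation accordingly produces the second isomorphism. No substantive obstacle arises; the result is essentially a synthesis of Theorems \ref{iso-T} and \ref{WCLT-iso} together with the orthogonal relationship between the $\mathcal V$ and $\mathcal B$ decompositions.
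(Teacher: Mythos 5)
Your proof is correct and takes exactly the same route as the paper: the paper presents this theorem without a displayed proof, preceded only by the remark ``Putting together Theorem \ref{iso-T} and Theorem \ref{WCLT-iso} we have thus proved,'' and your argument simply makes explicit the block-wise addition of the two representations and the regrouping via $\mathcal B_0=\mathcal V_0$, $\mathcal B_l=\mathcal V_{-l}\oplus\mathcal V_{n-l}$ that the paper leaves implicit.
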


\section{Assymptotic equivalence of WCLT and circulant generators}

Given two sequences in $\ell^1(\mathbb Z)$,  $\Gamma=\{\Gamma_m\}_m$, $\Gamma_m\geq 0$ and $\zeta=\{\zeta_m\}_m$, $\zeta_m\in \mathbb R$, consider the  sequence $\{\mathcal L^{(n)}_T\}_n$  of \textbf{WCLT} generators with C.P. Toeplitz dissipative part according to (\ref{WCLT}). So for each $n \geq 1$
\begin{align*} \mathcal L^{(n)}_T(x)&=\Phi^{(n)}(x) + G^{(n)*}x + xG^{(n)}, \\
\Phi^{(n)}(x)&=\sum_{m=1}^{n-1} \Gamma^-_mS^{*m}x S^m + \Gamma^+_m S^m x S^{*m},\\ 
G^{(n)}&=-\frac{1}{2} \Phi^{(n)}(\unit_n)-iH^{(n)}, \\
H^{(n)}&= \sum_{m=1}^{n-1} \zeta^-_mS^{*m} S^m + \zeta^+_m S^m S^{*m}. \end{align*}
where $$\Gamma_m=\left\{\begin{array}{lc}\Gamma_m^+  & \text{ if $m > 0$} \\ \Gamma_m^- &\text{ if $m<0$}\end{array}\right.,\ \text{  and  }\ \ \zeta=\left\{\begin{array}{lc}\zeta^+  & \text{ if $m > 0$} \\ \zeta^- &\text{ if $m<0$}\end{array}\right..$$

We remark that generators are \textit{adapted} to the sequences $\Gamma$, $\zeta$.

 Call the partial sums Let $s^{(n)}=\sum_{m=1}^{n-1} \Gamma_m^+ + \Gamma_m^-$ and $\tilde{s}^{(n)}=\sum_{m=1}^{n-1} |\zeta_m^+| + |\zeta_m^-|.$ 

Consider the sequence $\{\mathcal L^{(C)}_n\}_n$ of circulant GKSL generators given by
\begin{align*}\mathcal L^{(C)}_n(x)=  \widetilde{\Phi}^{(n)}(x) - s^{(n)}\unit_n. \end{align*}

 where $\widetilde{\Phi}^{(n)}$ is the C.P. circulant map defined in (\ref{CP-circ}) and $s^{(n)}=\sum_{j=0}^{n-1} c_j^{(n)}$.

The circulant matrix associated to $\mathcal L^{(C)}_n$ (see \cite{BQ})  is the matrix $Q^{(n)}=C^{(n)}-s^{(n)}\unit_n$.

%
It is well known that  $\Vert T_0^{(n)}\Vert \leq \Vert \Gamma \Vert $ and  $\Vert C^{(n)} \Vert \leq \Vert \Gamma \Vert.$

\begin{theorem}\label{AE-WCLT} Let    $\{\mathcal L^{(T)}_n\}_n$  be a sequence WCLT GKSL generators with C.P. Toeplitz dissipative part adapted to $\Gamma$ and $\zeta$.
If $\Gamma,\zeta\in \ell^1(\mathbb Z) $  then there exists  a sequence of circulant GKSL generators $\{\mathcal L^{(C)}_n\}_n$ such that   
$\mathcal L^{(T)}_n \thicksim\mathcal L^{(C)}_n$. \end{theorem}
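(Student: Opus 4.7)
The plan is to split the difference as $\mathcal L^{(T)}_n - \mathcal L^{(C)}_n = (\Phi^{(n)} - \widetilde\Phi^{(n)}) + (\Psi^{(n)} + s^{(n)}\,\mathrm{Id})$, where $\Psi^{(n)}(x) = G^{(n)*}x + xG^{(n)}$ is the non-dissipative part of the WCLT generator and $\mathrm{Id}$ denotes the identity map on $\mathcal M_n(\mathbb C)$. The asymptotic equivalence $\Phi^{(n)}\thicksim\widetilde\Phi^{(n)}$ is precisely Theorem \ref{AE-CP}, so the task reduces to two checks: (i) a uniform strong-norm bound on both sequences and (ii) the vanishing of $|\Psi^{(n)} + s^{(n)}\,\mathrm{Id}|_{n^2}$. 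For (i), triangle inequality combines $\Vert\Phi^{(n)}\Vert, \Vert\widetilde\Phi^{(n)}\Vert \leq \Vert\Gamma\Vert_{\ell^1}$ from (\ref{norm-CP}), the trivial estimate $\Vert s^{(n)}\,\mathrm{Id}\Vert \leq \Vert\Gamma\Vert_{\ell^1}$, and the bound $\Vert\Psi^{(n)}\Vert \leq s^{(n)} + 2\tilde s^{(n)}$ obtained by reading off the moduli of the diagonal entries of the $G^{(n)}_l$ produced by Theorem \ref{WCLT-iso}; the $\ell^1$ hypothesis on $\Gamma$ and $\zeta$ then yields a single $M$ valid for all $n$.

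For (ii), passing to the cyclic-diagonal ordering as in Theorem \ref{WCLT-iso} and in (\ref{dif-phi}) makes $\Psi^{(n)} + s^{(n)}\,\mathrm{Id}$ block-diagonal with blocks $G^{(n)}_0 + s^{(n)}\unit_n$ and $(G^{(n)}_{-j} + s^{(n)}\unit_{n-j}) \oplus (G^{(n)}_{n-j} + s^{(n)}\unit_j)$ for $j=1,\ldots,n-1$. Since every $G^{(n)}_l$ is diagonal with entries $-\left(\tfrac{s(k)+s(k+|l|)}{2} \pm i[\tilde s(k) - \tilde s(k+|l|)]\right)$, the total squared HS norm splits into a real piece $\sum_{l,k}\bigl(s^{(n)}-\tfrac{s(k)+s(k+|l|)}{2}\bigr)^2$ and an imaginary piece $\sum_{l,k}(\tilde s(k) - \tilde s(k+|l|))^2$. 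The real piece is bounded using $a^2 \leq \Vert\Gamma\Vert_{\ell^1}\cdot a$ with $a = s^{(n)} - \tfrac{s(k)+s(k+|l|)}{2} \geq 0$ (Lemma \ref{aux}.1) and then summed via Lemma \ref{aux}.2--3 to reveal a bound proportional to $(n-1)\sum_{m=1}^{n-1}m(\Gamma_m^+ + \Gamma_m^-)$. Dividing by $n^2$ and applying Kronecker's lemma to $\Gamma\in\ell^1(\mathbb Z)$ gives vanishing.

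The main obstacle, in the absence of any analogue of Lemma \ref{aux} for $\tilde s$, is the imaginary piece. I would write $\tilde s(k+j) - \tilde s(k) = -\sum_{m=n-k-j}^{n-1-k}\zeta_m^+ + \sum_{m=k+1}^{k+j}\zeta_m^-$ and use the bound $(\tilde s(k) - \tilde s(k+j))^2 \leq 2\Vert\zeta\Vert_{\ell^1}|\tilde s(k) - \tilde s(k+j)|$. A careful double counting of the index set then yields
\[\sum_{j=1}^{n-1}\sum_{k=0}^{n-1-j}|\tilde s(k) - \tilde s(k+j)| \leq \sum_{m=1}^{n-1} m(n-m)\bigl(|\zeta_m^+| + |\zeta_m^-|\bigr),\]
whose $1/n^2$-multiple tends to $0$ by Kronecker's lemma applied to $\zeta\in\ell^1(\mathbb Z)$. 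The symmetric blocks $G^{(n)}_{n-j} + s^{(n)}\unit_j$ are handled identically via a change of summation variable, completing the proof.
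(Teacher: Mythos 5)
Your proposal follows the same macro-structure as the paper's proof: same split of $\mathcal L^{(T)}_n - \mathcal L^{(C)}_n$ into the dissipative difference $\Phi^{(n)}-\widetilde\Phi^{(n)}$ (handled by Theorem \ref{AE-CP}) and the block-diagonal non-dissipative remainder, and the same appeal to Theorems \ref{iso-T}, \ref{WCLT-iso} to pass to the $\mathcal V$/$\mathcal B$-adapted ordering. The real piece is treated essentially as in the paper: you use $a^2 \leq \Vert\Gamma\Vert_{\ell^1}\, a$ for $a = s^{(n)} - \tfrac{s(k)+s(k+|l|)}{2}\ge 0$ where the paper uses the expansion $x^2 = a^2 + 2a(x-a) + (x-a)^2$; both routes pivot on Lemma \ref{aux}.1--3 and then kill the resulting quantity $\frac{n-1}{n^2}\sum_m m(\Gamma_m^+ + \Gamma_m^-)$ by Kronecker's lemma (the paper does this via an explicit $\epsilon$-$N$ splitting, which is just Kronecker written out by hand). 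Your version of this step is slightly cleaner but not a different idea.

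Where you genuinely diverge is the imaginary piece. The paper estimates
$\frac{2}{n^2}\sum_{j,k}\bigl(j\,\mathcal I(k,k+j)\bigr)^2$
by a chain $\leq 2\bigl(\tfrac{1}{n}\sum\sqrt{j}\,\mathcal I\bigr)^2 \leq 2\bigl(\tfrac{1}{\sqrt n}\sum\mathcal I\bigr)^2 \leq 2\bigl(\tfrac{1}{\sqrt n}\tilde s^{(n)}\bigr)^2$, where the $j$ in the first expression looks like a typo for $j/|j|$ and the step $\frac{1}{n^2}\sum a^2 \leq \bigl(\frac{1}{n}\sum a\bigr)^2$ needs sign control on the $\mathcal I(k,k+j)$ that is never established. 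You instead expand $\tilde s(k+j)-\tilde s(k)$ explicitly, count how often each $\zeta_m^\pm$ appears in $\sum_{j,k}|\tilde s(k)-\tilde s(k+j)|$ to get the sharp count $m(n-m)$, bound the square via $(\cdot)^2 \leq 2\Vert\zeta\Vert_{\ell^1}\,|\cdot|$, and close with Kronecker. This is exactly the ``analogue of Lemma \ref{aux} for $\tilde s$'' that you correctly observe the paper never states, and it makes the imaginary estimate mirror the real one instead of resting on the paper's shakier norm chain. A small side benefit: your bound $\Vert\Psi^{(n)}\Vert \leq s^{(n)} + 2\tilde s^{(n)}$ honestly accounts for the imaginary part of the diagonal entries of $G_l$, whereas the paper's Theorem \ref{WCLT-iso}.3 asserts $\Vert\Psi\Vert\leq s$ and silently drops the $\tilde s$ contribution; since both yield a uniform constant under the $\ell^1$ hypotheses, the conclusion is unaffected, but your version is the one that actually follows from the stated formulas.

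Net assessment: the proposal is correct and, for the imaginary block, more careful and more self-contained than the published argument; elsewhere it coincides with the paper up to cosmetic rewriting.
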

\begin{proof} The sequences are uniformly bounded since for $n\geq 1$
\begin{align*} \Vert \mathcal L^{(T)}_n \Vert &= \Vert \Phi_n + G_n\Vert \leq M  +\Vert \Gamma \Vert	 
\\ \Vert \mathcal L^{(C)}_n \Vert &=\Vert Q^{(n)} \Vert =\Vert C^{(n)} - s^{(n)}\unit_n \Vert \leq 2\Vert \Gamma \Vert. \end{align*}

The representation theorems for the generators allow us to write
\begin{align}\label{HS-eq} \left| \mathcal L^{(n)}_T-\mathcal L^{(n)}_C \right|^2_{n^2}=\frac{1}{n^2}\left|T^{(n)}_0+G^{(n)}_0 - Q^{(n)}\right|^2 + \frac{1}{n^2}\sum_{j=1}^{n-1} \left|T^{(n)}_{-j}+G^{(n)}_{-j} \oplus T^{(n)}_{n-j}+G^{(n)}_{n-j}- Q^{(n)}\right|^2 .\end{align}

By the orthogonality of different blocks we have
\begin{align*}\left|T_0^{(n)}+G_0^{(n)}-Q^{(n)}\right|^2&=  \left|T_0^{(n)}-C^{(n)}\right|^2 + \left|G_0^{(n)} + s^{(n)}\unit_n \right|^2, \\
\left|T^{(n)}_{-j}+G^{(n)}_{-j} \oplus T^{(n)}_{n-j}+G^{(n)}_{n-j}- Q^{(n)}\right|^2& =\left | T^{(n)}_{-j}\oplus T^{(n)}_{n-j}- C^{(n)} \right|^2 + \left|G_{-j}\oplus G_{n-j}+	s^{(n)}\unit_n \right|^2.
\end{align*}

Thus
\begin{align}  \left| \mathcal L^{(n)}_T-\mathcal L^{(n)}_C \right|^2_{n^2} &= \nonumber \frac{1}{n}\left|T_0^{(n)}-C^{(n)}\right|_n^2 +  \frac{1}{n}\sum_{j=1}^{n-1}\left | T^{(n)}_{-j}\oplus T^{(n)}_{n-j}- C^{(n)} \right|^2_n  \\
& \label{ae-L}+ \frac{1}{n}\left|G_0^{(n)} + s^{(n)}\unit_n \right|^2_n +\frac{1}{n}\sum_{j=1}^{n-1}\left|G_{-j}\oplus G_{n-j}+	s^{(n)}\unit_n \right|^2_n.\end{align}

Recall that by construction $\left|T_0^{(n)}-C^{(n)}\right|_n \xrightarrow[\ \ n \ \ ]{}	0,$ while by Lemma \ref{suma-HS}
$$ \frac{1}{n}\sum_{j=1}^{n-1}\left | T^{(n)}_{-j}\oplus T^{(n)}_{n-j}- C^{(n)} \right|_n^2 \xrightarrow[\ \ n \ \ ]{}0.$$

It remains to show the sums in (\ref{ae-L}) vanish as well.
 
For the first sum take $\epsilon >0$  and let $N$ be such that $\displaystyle \sum_{k=N}^{\infty} \Gamma_k^++\Gamma_k^-  \leq \sqrt{\epsilon}.$

Using that $x^2+y^2 \leq (x+y)^2$ and 3. from Lemma \ref{aux}, if $n\geq N$ we have 

\begin{align*}\frac{1}{n}\left|G_0^{(n)} + s^{(n)}\unit_n \right|_n^2 &=\frac{1}{n^2}\sum_{k=0}^{n-1}\left|s^{(n)}- s^{(n)}(k)\right|^2 = \frac{1}{n^2} \sum_{k=0}^{n-1}\left(\sum_{m=n-k}^{n-1} \Gamma^+_m + \sum_{m=k+1}^{n-1}\Gamma_m^- \right)^2  \\
& \leq \frac{1}{n^2} \left(\sum_{k=0}^{n-1} \left[\sum_{m=n-k}^{n-1} \Gamma^+_m + \sum_{m=k+1}^{n-1}\Gamma_m^- \right]\right)^2  \\
&= \left(\frac{1}{n}\sum_{k=1}^{N-1}k( \Gamma_k^+ +\Gamma_k^- )+\frac{1}{n}\sum_{k=N}^{n-1}k( \Gamma_k^+ +\Gamma_k^- )\right)^2  \\
&\leq \left(\frac{1}{n}\sum_{k=1}^{N-1}k( \Gamma_k^+ +\Gamma_k^- )+\sum_{k=N}^{\infty} ( \Gamma_k^+ +\Gamma_k^- )\right)^2 \\
&\leq \left(\frac{1}{n}\sum_{k=1}^{N-1}k( \Gamma_k^+ +\Gamma_k^- )+ \sqrt{\epsilon}\right)^2 \xrightarrow[\ \ n \ \ ]{} \epsilon  \end{align*}

The remaining term is in general a sum  of  the square modulus of complex numbers. For convenience call 
\begin{align*} \mathcal R(k,k+j)=\frac{s(k)+s(k+j)}{2}, \ \ \ \ \ \ \mathcal I(k,k+j)= \tilde{s}(k)-\tilde{s}(k+j). \end{align*}

A change of variables in the second allows us to simplify each term of the last sum in (\ref{ae-L}) 
\begin{align}\left|G_{-j}\oplus G_{n-j}+	s^{(n)}\unit_n \right|^2&=\nonumber \sum_{k=0}^{n-1-j} \left| \left( \mathcal R(k,k+j)-s^{(n)} \right)  +i \left(\frac{j}{|j|}  \mathcal I(k,k+j)\right) \right|^2 \\ \nonumber
 & \nonumber+ \sum_{k=0}^{j-1} \left| \left( \mathcal R(k,k+n-j)-s^{(n)} \right)  +i \left(\frac{n-j}{|n-j|} \mathcal I(k,k+n-j) \right) \right|^2 \\
=&\nonumber 2\sum_{k=0}^{n-1-j} \left| \left( \mathcal R(k,k+j)-s^{(n)} \right)  +i \left(\frac{j}{|j|} \mathcal I(k,k+j) \right) \right|^2 \\
=& \label{complex-sum} 2\sum_{k=0}^{n-1-j} \Big( \mathcal R(k,k+j)-s^{(n)} \Big)^2  +2\sum_{k=0}^{n-1-j} \Big( j\  \mathcal I(k,k+j) \Big)^2   \end{align}

For $\epsilon >0$ let $N$ be such that  $ \displaystyle \sum_{m=N}^\infty \Gamma_m^+ +\Gamma_m^- \leq \frac{\epsilon}{4 \Vert \Gamma \Vert}.$

Using $x^2=a^2+2a(x-a)+(x-a)^2$ and \textit{(1)},\textit{(2)} of Lemma \ref{aux}, if $n>N$ then the first sum of (\ref{complex-sum}) is
\begin{align*}  &\frac{2}{n^2} \sum_{j=1}^{n-1}\sum_{k=0}^{n-1-j} \Big( \mathcal R(k,k+j)-s^{(n)}\Big)^2  \\&=\frac{2}{n^2} \sum_{j=1}^{n-1}\sum_{k=0}^{n-1-j}\left(\mathcal R(k,k+j)^2 - \Big(s^{(n)}\Big)^2-2s^{(n)} \Big(\mathcal R(k,k+j)-s^{(n)}\Big) \right) \\
&= \frac{2}{n^2} \sum_{j=1}^{n-1}\sum_{k=0}^{n-1-j} \mathcal R(k,k+j)^2 -\frac{(n-1)}{n}\Big( s^{(n)}\Big)^2 +\frac{4	s^{(n)}}{n^2}\sum_{j=1}^{n-1}\sum_{k=0}^{n-1-j}\left[s^{(n)}-\frac{s^{(n)}(k)+s^{(n)}(k+j)}{2}\right] \\
&<\frac{2}{n^2} \sum_{j=1}^{n-1}\sum_{k=0}^{n-1-j} \Big( s^{(n)}\Big)^2 -\frac{(n-1)}{n}\Big(s^{(n)}\Big)^2  + \frac{4(n-1)}{n^2}s^{(n)}\sum_{m=1}^{n-1}m(\Gamma_m^+ +\Gamma_m^-) \\
&\leq \frac{4(n-1)}{n^2}s^{(n)}\sum_{m=1}^{N-1}m(\Gamma_m^+ +\Gamma_m^-)+\frac{4(n-1)}{n^2}s^{(n)}\sum_{m=N}^{\infty}m(\Gamma_m^+ +\Gamma_m^-) \\
& <\frac{4}{n}s^{(n)}\sum_{m=1}^{N-1}m(\Gamma_m^+ +\Gamma_m^-)+4s^{(n)}\sum_{m=N}^{\infty}(\Gamma_m^+ +\Gamma_m^-)  \xrightarrow[\ \ n \ \ ]{}	\epsilon,  \end{align*}
where the convergence $s^{(n)}\xrightarrow[\ \ n \ \ ]{}	\Vert \Gamma \Vert$ was used. 

On the other hand since $\tilde{s}^{(n)}\xrightarrow[\ \ n \ \ ]{}	\Vert \zeta \Vert$, the second sum of (\ref{complex-sum}) satisfies

\begin{align*}\frac{2}{n^2}\sum_{j=1}^{n-1}\sum_{k=0}^{n-1-j} \Big( j\  \mathcal I(k,k+j) \Big)^2 &\leq 2 \Big(\frac{1}{n}\sum_{j=1}^{n-1}\sum_{k=0}^{n-1-j}  \sqrt{j}\ \mathcal I(k,k+j) \Big)^2 \\ 
&\leq    2 \Big(\frac{1}{\sqrt{n}}\sum_{j=1}^{n-1}\sum_{k=0}^{n-1-j} \mathcal I(k,k+j) \Big)^2  \\
&= 2 \Big(\frac{1}{\sqrt{n}} \Big( \sum_{l=1}^{n-1}\zeta^+_l -\sum_{l=1}^{n-1}\zeta^-_l \Big) \Big)^2 \\
&\leq 2 \Big(\frac{1}{\sqrt{n}} \tilde{s}^{(n)}  \Big)^2  \xrightarrow[\ \ n \ \ ]{}	0\end{align*}

This shows that
$$\lim_n \left| \mathcal L^{(n)}_T-\mathcal L^{(n)}_C \right|_{n^2} =0. $$ 
\end{proof}

\section*{Acknowledgement}

\end{document}